\titlespacing{\paragraph}{%
  0pt}{%              left margin
  0.5\baselineskip}{% space before (vertical)
  1em}%               space after (horizontal)
\def\moverlay{\mathpalette\mov@rlay}
\def\mov@rlay#1#2{\leavevmode\vtop{%
		\baselineskip\z@skip \lineskiplimit-\maxdimen
		\ialign{\hfil$\m@th#1##$\hfil\cr#2\crcr}}}
\newcommand{\charfusion}[3][\mathord]{
	#1{\ifx#1\mathop\vphantom{#2}\fi
		\mathpalette\mov@rlay{#2\cr#3}
	}
	\ifx#1\mathop\expandafter\displaylimits\fi}
\colorlet{DarkRed}{red!75!black}
\colorlet{DarkGreen}{green!75!black}
\colorlet{DarkBlue}{blue!75!black}
\declaretheorem[numberwithin=section]{theorem}
\declaretheorem[numberlike=theorem]{lemma}
\declaretheorem[numberlike=theorem]{corollary}
\declaretheorem[numberlike=theorem]{definition}
\declaretheorem[numberlike=theorem]{fact}
\newenvironment{fminipage}%
  {\begin{Sbox}\begin{minipage}}%
  {\end{minipage}\end{Sbox}\fbox{\TheSbox}}
\def\defeq{\stackrel{\mathrm{def}}{=}}
\def\abs#1{\left|#1  \right|}
\newcommand\cchi{\boldsymbol{\chi}}
\newcommand\ttau{\boldsymbol{\tau}}
\newcommand\ttautilde{\widetilde{\boldsymbol{\tau}}}
\newcommand\dd{\boldsymbol{\mathit{d}}}
\newcommand\rr{\boldsymbol{\mathit{r}}}
\newcommand\rrApprox{\boldsymbol{\widetilde{\mathit{r}}}}
\newcommand\ww{\boldsymbol{\mathit{w}}}
\newcommand\yy{\boldsymbol{\mathit{y}}}
\newcommand\zz{\boldsymbol{\mathit{z}}}
\newcommand\xx{\boldsymbol{\mathit{x}}}
\renewcommand\AA{\boldsymbol{\mathit{A}}}
\newcommand\BB{\boldsymbol{\mathit{B}}}
\newcommand\DD{\boldsymbol{\mathit{D}}}
\newcommand\GG{\boldsymbol{\mathit{G}}}
\newcommand\HH{\boldsymbol{\mathit{H}}}
\newcommand\MM{\boldsymbol{\mathit{M}}}
\newcommand\LL{\boldsymbol{\mathit{L}}}
\newcommand\Otil{\widetilde{O}}
\newcommand\Ohat{\widehat{O}}
\newcommand\eps{\varepsilon}
\newcommand{\er}{\mathcal{R}_{\mathrm{eff}}}
\newcommand*\samethanks[1][\value{footnote}]{\footnotemark[#1]}
\begin{document}

\title{Density Independent Algorithms
for Sparsifying $k$-Step Random Walks}

\author[1]{Gorav Jindal}
\author[1]{Pavel Kolev\thanks{This work has been funded by the Cluster of Excellence ``Multimodal Computing and Interaction'' within the Excellence Initiative of the German Federal Government.}}
\author[2]{Richard Peng\thanks{This material is based upon work supported by the National Science Foundation under Grant No. 1637566.}}
\author[2]{Saurabh Sawlani\samethanks}

\affil[1]{Max-Planck-Institut f\"{u}r Informatik, Saarbr\"{u}cken, \{gjindal,pkolev\}@mpi-inf.mpg.de}
\affil[2]{Georgia Institute of Technology, \{rpeng,sawlani\}@gatech.edu}

\date{}
\maketitle

\begin{abstract}
We give faster algorithms for producing sparse approximations
of the transition matrices of $k$-step random walks
on undirected, weighted graphs.
These transition matrices also form graphs, and arise
as intermediate objects in a variety of graph algorithms.
Our improvements are based on a better understanding of processes
that sample such walks, as well as tighter bounds on key
weights underlying these sampling processes.
On a graph with $n$ vertices and $m$ edges, our algorithm
produces a graph with about $n\log{n}$ edges that approximates
the $k$-step random walk graph in about $m + n \log^4{n}$ time.
In order to obtain this runtime bound, we also revisit ``density independent''
algorithms for sparsifying graphs whose runtime overhead is expressed only
in terms of the number of vertices.
\end{abstract}

%\tableofcontents
%\newpage

\setcounter{page}{0}
\pagenumbering{arabic}

\section{Introduction}
\label{sec:introduction}	

Random walks in graphs are fundamental objects in both graph
algorithms and graph data structures.
Problems related to random walks,
such as shortest paths and minimum cuts
are well studied in both static~\cite{Sommer14}
and dynamic settings~\cite{HenzingerKN14,GoranciHT16}.
While some of these problems, such as shortest path, aim to find a single walk,
other problems such as flows/cuts~\cite{GoldbergT14}
or triangle densities~\cite{BjorklundPWZ14,Tsourakakis08}
aim to capture information related to collections of walks.
Algorithms and data structures for such problems often need
to store, or can be sped up by, intermediate structures that
capture the global properties of multi-step walks~\cite{PaghT12,GuptaP13,AbrahamDKKP16,BhattacharyaHI14}.
However, many intermediate structures are inherently dense
and therefore expensive to compute explicitly.

Graph sparsification is a technique for efficiently
approximating a dense graph by a sparser one,
while preserving some key properties such as sizes of
graph cuts, distances between vertices,
or linear operator properties of matrices associated with the graphs.
Spectral sparsifiers provide linear operator approximations that 
also imply approximation to all graph cuts. 
Their constructions have some of the simplest interactions with
statistical concentration results~\cite{BatsonSST13}.
%\rp{don't have `the' in front of plural things, `the' is usually
%used to refer to a particular object}
Spectral sparsifiers of (possibly dense) intermediate objects
have a variety of applications in graph algorithms,
such as sampling from graphical models~\cite{ChengCLPT15},
solving linear systems~\cite{PengS14,KyngLPSS16},
sampling random spanning trees~\cite{DurfeeKPRS16}
and maintaining approximate minimum cuts in dynamically
changing graphs~\cite{AbrahamDKKP16}.
In these applications, the optimal performance is achieved
by producing a sparsifier of the denser intermediate object
directly, instead of generating the larger exact object.
Such implicit sparsification routines were first studied for
random walk matrices~\cite{PengS14,ChengCLPT15}.
These matrices contain the pairwise transition probabilities
between vertices under $k$-step walks. Moreover, such matrices are dense even
for sparse original graphs with small $k$:
the $2$-step walk on the $n$-vertex star has non-zero
transition probabilities between any pair of vertices.
On the other hand, as the $k$-step random walk can be viewed
as a single random process, these vertex-to-vertex transition
probabilities correspond to a graph, and therefore have a
sparse approximate.

Cheng et al.~\cite{ChengCLPT15} systematically studied random
walk sparsification and its applications.
They gave a routine that produces an $\epsilon$-spectral
sparsifier (which we will formally define in Subsection~\ref{subsec:SApproxG})
with $O(\epsilon^{-2} n \log{n})$ edges for a $k$-step walk
matrix in $O( \epsilon^{-2} k^2 m \log^{O(1)}n)$ time.
Our main result, which we show in Section~\ref{sec:WalkSparsify}
is a direct improvement of that routine:
\begin{theorem}\label{thm:main}
(\textbf{Sparsifying Laplacian Monomials})
Given a graph $G$
and an error $\eps\in(0,1)$, there is an algorithm that outputs
an $\eps$-spectral sparsifier of $G^{k}$ with
at most $O(\eps^{-2}n\log n)$ edges in
$\Ohat(m+k^{2}\eps^{-2}n\log^{4}n)$ time.
\footnote{
We use $\Ohat$ to denote the omission of
logarithmic terms lower than the ones shown in the set.
In all cases in this paper, we track terms of $\log{n}$
explicitly and such notation hides terms of $\log\log{n}$.
In all these cases, this notation hides a term of at most
$(\log\log{n})^2$.}
\end{theorem}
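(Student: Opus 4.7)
My plan is to construct the sparsifier implicitly by sampling random walks of length $k$ in $G$, treating each walk $v_0 \to v_1 \to \cdots \to v_k$ as a sample of the edge $(v_0,v_k)$ in $G^k$, and coupling this with a density-independent sparsification on the base graph to eliminate any dependence on $m$ after the first pass.

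First, I would invoke the density-independent sparsification routine developed elsewhere in this paper on $G$ directly, obtaining $\widetilde G$ with $O(n\log n)$ edges in $\Ohat(m + n\log^{c} n)$ time. Since $\widetilde G \Approx{\eps'} G$ implies $\widetilde G^k \Approx{k\eps'} G^k$, taking $\eps' = O(\eps/k)$ reduces the problem to sparsifying $\widetilde G^k$; the additional polylogarithmic factors from pushing $\eps'$ down by a factor of $k$ fit within the stated bound. From this point on, all work is on $\widetilde G$ and is therefore density-independent.

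Next, I would write $L_{\widetilde G^k}$ as a positive combination $\sum_{w} c_{w}\,L_{e_w}$ indexed by walks $w$ of length $k$ in $\widetilde G$, where $L_{e_w}$ is the rank-one Laplacian of the edge joining the walk's endpoints and $c_w$ is determined by the walk's probability. The matrix Chernoff inequality then guarantees an $\eps$-spectral sparsifier from $N = O(\eps^{-2}\,\tau_{\max}\,n\log n)$ walks drawn with probabilities proportional to upper bounds on their leverage scores, where $\tau_{\max}$ measures the multiplicative slack of those upper bounds. The core ingredient is to exhibit weight upper bounds whose total across all walks is $O(n)$, so that $\tau_{\max}=O(1)$ and $N = O(\eps^{-2}n\log n)$, matching the claimed edge count. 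These are the ``tighter bounds on key weights'' alluded to in the abstract.

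For the runtime, each individual walk can be generated in $\Ohat(k\log n)$ time using balanced-tree neighbor-sampling in $\widetilde G$, so drawing all $N$ walks contributes $\Ohat(k\,\eps^{-2} n\log^{2} n)$ to the total cost; the remaining factor of $k$ and additional polylogs in the stated bound stem from computing the leverage-score upper bounds themselves, most naturally via a doubling recursion over walk lengths $1,2,4,\dots,k$ in which the sparsifier produced at level $2^i$ is used to compute the weights needed for level $2^{i+1}$. The main obstacle I anticipate is the proof of the total-weight bound from the previous paragraph: a naive per-edge bound on walk leverage scores overcounts by factors polynomial in $k$, and the remedy appears to require reapportioning each walk's leverage across the edges it traverses and controlling the resulting sums by effective resistances in $\widetilde G$ through an averaging argument over walk edges. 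This amortized charging is where the density of $G$ is genuinely traded for spectral structure at shorter walk lengths, and I expect it to be the most delicate part of the proof.
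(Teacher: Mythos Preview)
Your proposed reduction---first sparsify $G$ to $\widetilde G$, then assert $\widetilde G^{k}\Approx{k\eps'} G^{k}$---is the central gap, and it fails. Spectral approximation $\LL_{\widetilde G}\approx_{1+\eps'}\LL_G$ controls only the quadratic form of $\DD-\AA$; it says nothing about $\DD$ and $\AA$ separately. Since $\LL_{G^{k}}=\DD^{1/2}\bigl(I-(\DD^{-1/2}\AA\DD^{-1/2})^{k}\bigr)\DD^{1/2}$ depends on the normalized adjacency, which is sensitive to degree changes that a Laplacian sparsifier freely makes, no implication $\widetilde G^{k}\approx G^{k}$ follows. Indeed, if this reduction held, the prior work~\cite{ChengCLPT15} would already be density-independent: one would sparsify $G$ once and then run their $\Otil(k^{2}m')$-time routine on the $m'=O(n\log n)$-edge $\widetilde G$. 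The fact that this paper needs new machinery is precisely because that shortcut is unavailable.

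The paper never replaces $G$ by a sparsifier; it samples walks in $G$ itself. The density-independent routine of Section~\ref{sec:mPlusStuff} is used only to produce effective-resistance \emph{upper bounds} $\rrApprox_e\geq\er^{G}(e)$ with $\sum_e\ww_e\rrApprox_e=\Ohat(n\log^{2}n)$ (Corollary~\ref{cor:SparsifyNumerical}). The leverage-score surrogate for a walk $(u_0,\ldots,u_k)$ is $\ww_{u_0,\ldots,u_k}\sum_i\rrApprox_{u_iu_{i+1}}$, justified by the triangle inequality for effective resistances (Fact~\ref{fact:TriangleIneq}) together with $\er^{G^{k}}\leq 2\,\er^{G}$ for odd $k$ (Lemma~\ref{lem:GndGk}); for even $k$ one must pass through the double cover $G\times P_2$ (Lemma~\ref{lem:EREven}), a case your outline does not address. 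A walk is drawn by picking an edge with probability proportional to $\ww_e\rrApprox_e$, a uniform position $i\in\{0,\ldots,k-1\}$, and extending by random walks in both directions (Lemma~\ref{lem:DistributionAccess}); the surrogates sum to exactly $k\sum_e\ww_e\rrApprox_e$ (Lemma~\ref{lem:totalWeights}), which with Theorem~\ref{thm:SparsifyGk} gives an $\eps$-sparsifier of $G^{k}$ with $\Ohat(\eps^{-2}kn\log^{3}n)$ edges in $\Ohat(m+\eps^{-2}k^{2}n\log^{4}n)$ time; one final resparsification brings the edge count down to $O(\eps^{-2}n\log n)$. Your instinct about ``reapportioning each walk's leverage across the edges it traverses'' is exactly the right picture, but the concrete mechanism is the triangle inequality plus the pick-edge-then-extend sampler---no doubling over walk lengths is used for Theorem~\ref{thm:main}.
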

We term this type of running time with most of the overhead
on the number of vertices, $n$, as density independent.
Such runtimes arise naturally in many other graph problems~\cite{FredmanT87},
and was first studied for graph sparsification in an earlier manuscript by
a subset of the authors~\cite{JindalK15:arxiv}.
Our results can also be combined with the repeated-squaring technique
in~\cite{ChengCLPT15} to reduce the runtime dependence on $k$
to logarithmic~\cite{ChengC16:comm}.
This plus generalizations to general random walk polynomials~\cite{ChengC16:comm}
would then supersede all claims from~\cite{JindalK15:arxiv}.
As these steps are much closer to~\cite{ChengCLPT15},
we will focus on the small $k$ case in this paper.
Furthermore, as our sparsification algorithm has a much more
direct interaction with routines that provide upper bounds of
effective resistances, they can likely be combined with tools
from~\cite{AbrahamDKKP16} to give dynamic algorithms for
maintaining $G^{k}$ under insertions/deletions to $G$.
However, as there are currently only few applications
of such sparsifiers, we believe it may be more fruitful to
extend the applications before further developing the tools.

Our algorithms, as with the ones from~\cite{JindalK15:arxiv,ChengCLPT15}
are based on implicit sampling of dense graphs by probabilities
related to effective resistances.
Such approach is the only known efficient approach for even the
`simpler' problem of producing cut sparsifiers of $G^{k}$.
Our improvements rely on an a key insight from the sparse Gaussian elimination
algorithm by Kyng and Sachdeva~\cite{KyngS16}: using triangle inequality
between effective resistances to obtain a tighter set of probability upper bounds.
This allows us to select the first edge, from which we ``grow'' a length $k$ walk, via an adaptive sampling process, instead of using uniform sampling as in the previous result~\cite{ChengCLPT15}.
Furthermore, this adaptive process removes any sampling count dependencies on $m$,
the number of edges, making a density-independent runtime possible.
This type of running time also has analogs in input sparsity
time algorithms in randomized numerical linear algebra~\cite{Woodruff14,CLMMPS15,CMM17}.
Obtaining density-independent bounds is critical for graph sparsification
algorithms because they are primarily invoked
on relatively dense graphs.
A graph sparsification routine that produces a sparsifier
with $\Ohat(n \log^2{n})$ edges in $\Ohat(m \log^2{n})$ time,
such as the combinatorial algorithm given in~\cite{KyngPPS17},
will only be invoked when $m > n \log^2{n}$, which means
the running time of the algorithm is actually
$\Omega(n \log^4{n})$.
As a result, we believe that for graph sparsification to work
as a primitive for processing large graphs, a running time of
$\Ohat(m + n\log^2{n})$ or better is necessary.

In Section~\ref{sec:mPlusStuff}, we provide some steps toward
this direction by giving a better density-independent spectral
sparsification algorithm.
We combine ideas from previous density-independent algorithms for
sparsifying graphs~\cite{KoutisLP15} with recent developments
in tree embedding and numerical algorithms to
obtain numerical sparsification routines
that run in $\Ohat(m + n\log^4{n})$ time, and combinatorial
ones that take $\Ohat(m + n\log^6{n})$ time.
Both of these routines are in turn applicable to the walk
sparsification algorithm in Section~\ref{sec:WalkSparsify},
giving routines for sparsifying $k$-step walks with similar running times:
the bound stated in Theorem~\ref{thm:main} is via the numerical routine.
While these results are far from what we think is the best possible,
we show a variety of new algorithmic tools for designing algorithms
for sparsifying $k$-step random walks.

\section{Background}
\label{sec:background}

We start with some background information about
graphs and matrices corresponding to them.
These matrices allow us to define graph approximations,
as well as compute key sampling probabilities needed
to produce spectral sparsifiers.
Due to space constraints, we will only formally define
most of the concepts.
More intuition on them can be found in notes on spectral
graph theory and random walks
such as~\cite{DoyleS84:book,Lovasz93}.

\subsection{Random Walks and Matrices}

Let $G=(V,E,\ww)$ be a weighted undirected graph.
We define its adjacency matrix $\AA$ as
$\AA_{uv} \defeq \ww_{uv}$, and its degree matrix
$\DD$ as $\DD_{uu} \defeq \sum_{v \in V} w_{uv}$
and $\DD_{uv} \defeq 0$ when $u \neq v$.
This leads to the graph Laplacian $\LL_G \defeq \DD - \AA$.

One step of a random walk can be viewed as distributing
the `probability mass' at a vertex evenly among the edges
leaving it, and passing them onto its neighbors.
In terms of these matrices, it is equivalent to first
dividing by $\DD$, and then multiplying by $\AA$.
Thus, the transition matrix of the $k$\textsuperscript{th}
step random walk is given by $(\DD^{-1}\AA)^{k}$.
The corresponding Laplacian matrix of the $k$-step random walk is defined by
\[
\LL_{G^{k}} \defeq \DD - \AA \left( \DD^{-1} \AA \right)^{k-1}.
\]
The matrices $\AA( \DD^{-1} \AA)^{k-1}$ can be viewed as a sum over length
$k$ walks.
This view is particularly useful in our algorithm,
as well as the earlier walk sparsification algorithm
by Cheng et al.~\cite{ChengCLPT15} because these walks
are a more `natural' unit upon which sparsification by
effective resistances is applied.
Formally, we can define the weight of a length $k$ walk
$u_{0},\dots,u_{k}$ by
\begin{equation}\label{eq:weightGk}
\ww_{u_{0},\dots,u_{k}}
\defeq
\frac{\prod_{i=1}^{k}\ww_{u_{i-1}, u_{i}}}
	{\prod_{i=1}^{k-1}\dd_{u_{i}}}.
\end{equation}
Straightforward checking shows that for any $u_0,u_k\in V$, it holds that
\begin{equation}\label{eq:weightGkEdge}
\ww^{\GG^{k}}_{u_0,u_k}
\defeq \left[ \AA\left(\DD^{-1}\AA\right)^{k-1} \right]_{u_0u_k}
=\sum_{ u_{1},\dots, u_{k-1}}
  \ww_{u_{0},\dots, u_{k}}.
\end{equation}

\subsection{Spectral Approximations of Graphs}\label{subsec:SApproxG}

Our notion of matrix approximations will be through the
$\approx$ symbol, which is in turn defined through
the L\"{o}ewner partial ordering of matrices.
For two matrices, $\AA$, and $\BB$, we say that
\[
\AA \preceq \BB
\]
if $\BB - \AA$ is positive semidefinite, and
\[
\AA \approx_{\kappa} \BB
\]
if there exists bounds
$\lambda_{\min}$ and $\lambda_{\max}$ such that
$\lambda_{\min} \AA \preceq \BB \preceq \lambda_{\max} \AA$,
and $\lambda_{\max} \leq \kappa \lambda_{\min}$.
This notation is identical to generalized eigenvalues,
and in particular, $\LL_{G} \approx_{\kappa} \LL_H$
implies that all cuts on them are within a factor
of $\kappa$ of each other.

The adjacency matrix of a graph has several of undesirable
properties when it comes to operator based approximations:
it can have a large number of eigenvalues at $0$, which
must be exactly preserved under relative error approximations.
As a result, graph approximations are defined in terms of
graph Laplacians.
As we will discuss below, these approximations are often
in terms of reducing edges.
So formally, we say that a graph $H$ is a $\kappa$-sparsifier
of $G$ if
\[
\LL_H \approx_{\kappa} \LL_G,
\]
and our goal is to compute an $\epsilon$-sparsifier of the $k$-step random walk matrix
$\LL_{G^{k}}$.

\subsection{Graph Sparsification by Effective Resistances}\label{subsec:GSbyEffRes}

There are two ways of viewing graph sparsification:
either as tossing coins independently on the edges,
or sampling a number of them from an overall probability
distribution.
We take the second view here because it may prove expensive to
access all edges in $G^{k}$.
The pseudocode of the generic sampling scheme is given in
Algorithm~\ref{alg:IdealSample}.

\begin{algorithm}[h]
	\caption{$\mathrm{IdealSample}(G, \eps, \ttautilde)$}
	\label{alg:IdealSample}
	
	\textbf{Input:} Graph $G=(V, E, \ww)$, integer $k$,
	and resistance upper bounds $\ttautilde_e$ such that $\ttautilde_e \geq \ww_e \er^{G}(e) $
	for all edges $e$.
	
	\textbf{Output: } An $\eps$-sparsifier $\HH$ of $\LL_G$ with $O(\eps^{-2} \sum_{e\in E} \ttautilde_e \log{n})$ edges.
	
	\begin{enumerate}
		\item Initiate $\HH$ as an empty graph.
		\item Set sample count $N \leftarrow O(\eps^{-2} \sum_{e\in E} \ttautilde_e \log{n})$
		\item Repeat $N$ times:
		\begin{enumerate}
			\item Pick an edge $e$ in $G$ with probability proportional to $\ttautilde_e$.
			\item Add $e$ to $\HH$ with new weight $\ww^{G}_e / (\ttautilde_e N)$.
		\end{enumerate}
	\end{enumerate}
\end{algorithm}

Algorithmically, the sampling step can be implemented
by first generating a number uniformly random in
$[0, \sum_{e} \ttautilde_e]$, and binary searching among
the prefix sums of the $\ttautilde_e$ values until it reaches
the edge corresponding to that point.
Note also that if we want to generate random numbers
with bounded precision, we can also round the $\ttautilde_e$
values of all edges up to the nearest multiple of $1/n$,
leading to at most $m / n = O(n)$ extra edges.
The guarantees of this routine require defining effective
resistances and leverage scores.
Effective resistance is a metric on a graph that is defined by:
\begin{equation}\label{eq:defER}
\er^{G}(u, v) \defeq \cchi_{uv}^{T} \LL_G^{\dag} \cchi_{uv},
\end{equation}
where $\LL_G^{\dag}$ denotes the pseudoinverse of $\LL_G$ and $\cchi_{uv}$ is the indicator vector with $1$ at $u$
and $-1$ at $v$.

The effective resistances $\er^{G}$ are directly related to the statistical leverage scores
$\ttau$ by the relation $\ttau_e=\ww_e \er^{G}(e)$. Moreover, these scores are well defined
for general matrices, and have a wide range of applications in randomized linear
algebra~\cite{Woodruff14,CLMMPS15,CMM17}.
The guarantees of sampling by weight times effective resistance,
or leverage scores, can then be formalized as:
\begin{lemma}\label{lem:sparsify}
(Sampling by Upper Bounds on Leverage Scores~\cite{Tropp12})
Suppose $G=(V,E,w)$ is a graph and $\ttautilde$ is a vector such that
$\ttautilde_e \geq \ww_e \er^{G}(e)$ for every edge $e$,
then any process that simulates the ideal sampling in Algorithm~\ref{alg:IdealSample}
produces an $\eps$-sparsifier of $G$ with $O(\eps^{-2} \sum_{e} \ttautilde_e \log{n})$ edges
in $\Otil(\eps^{-2} \sum_{e} \ttautilde_e \log^2{n})$ time.
\end{lemma}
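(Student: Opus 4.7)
The plan is to recast the sampling process as a sum of $N$ independent PSD random matrices in the normalized basis and invoke matrix Chernoff (Tropp's Matrix Bernstein would be my tool of choice). Setting $T \defeq \sum_e \ttautilde_e$, I would associate each iteration with a rank-one PSD matrix
$$
\YY_i \defeq \frac{T}{N} \cdot \frac{\ww_{e_i}}{\ttautilde_{e_i}} \cdot \LL_G^{\dag/2} \cchi_{e_i} \cchi_{e_i}^T \LL_G^{\dag/2},
$$
whose expectation under the sampling distribution $p_e = \ttautilde_e / T$ is exactly $(1/N) \PPi$, where $\PPi$ is the orthogonal projector onto $\mathrm{im}(\LL_G)$. (This assumes the standard convention that the weight placed on $e$ in $H$ carries a compensating factor of $T$ so that $\E[\LL_H] = \LL_G$; the pseudocode as stated is off by this global factor.) Showing that $\LL_H$ is a $(1+\eps)$-spectral sparsifier then reduces, by conjugating by $\LL_G^{1/2}$, to proving $\|\sum_i \YY_i - \PPi\|_2 \leq \eps$.

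Next I would bound the per-sample operator norm and the matrix variance. The unique nonzero eigenvalue of $\LL_G^{\dag/2} \cchi_{uv} \cchi_{uv}^T \LL_G^{\dag/2}$ equals $\er^G(e)$, so
$$
\|\YY_i\|_2 \leq \frac{T}{N} \cdot \frac{\ww_e \er^G(e)}{\ttautilde_e} \leq \frac{T}{N},
$$
where the last step is precisely where the hypothesis $\ttautilde_e \geq \ww_e \er^G(e)$ enters. A parallel computation using the identity $(\uu_e \uu_e^T)^2 = \er^G(e) \cdot \uu_e \uu_e^T$ with $\uu_e = \LL_G^{\dag/2} \cchi_e$ yields the variance bound $\|\sum_i \E[\YY_i^2]\|_2 \leq T/N$. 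Plugging these into Matrix Bernstein with $N = C \eps^{-2} T \log n$ delivers the spectral bound with failure probability at most $n^{-\Omega(1)}$.

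For the runtime, I would precompute prefix sums of the $\ttautilde_e$'s once in $O(m)$ time, then draw each sample by generating a uniform variate in $[0, T]$ and binary searching in $O(\log n)$ word operations; the rounding remark after the algorithm keeps the required precision to $O(\log n)$ bits. Over $N = O(\eps^{-2} T \log n)$ iterations this gives the claimed $\Otil(\eps^{-2} T \log^2 n)$ total time.

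The main obstacle I anticipate is making the variance and operator-norm estimates tight enough that the sample count scales with $T = \sum_e \ttautilde_e$ alone, not with $m$ or with $\max_e (\ww_e \er^G(e)/\ttautilde_e)$. This is exactly what the leverage-score upper bound hypothesis buys: the dimensionless inequality $\ww_e \er^G(e)/\ttautilde_e \leq 1$ controls both the per-sample norm and the total variance simultaneously, which is what makes the resulting bound density-independent in $m$. Everything else amounts to a direct invocation of Tropp's matrix concentration framework.
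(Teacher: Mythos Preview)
Your argument is correct and is precisely the standard matrix-concentration route that the citation~\cite{Tropp12} points to; the paper itself does not supply a proof of this lemma but simply quotes it as a known result. Your side observation that the reweighting in Algorithm~\ref{alg:IdealSample} is missing a global factor of $T=\sum_e \ttautilde_e$ (so that $\E[\LL_H]=\LL_G$ rather than $T^{-1}\LL_G$) is also accurate.
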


The bound on sample count then follows from:
\begin{fact}[Foster's Theorem]\label{fact:foster}
For any undirected graph $G = (V, E, \ww)$, we have:
\[
\sum_{e\in E} \ww_e \er^{G}\left(u, v\right) = n - 1.
\]
\end{fact}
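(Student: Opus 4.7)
The plan is to prove Foster's theorem by rewriting each term $\ww_e \er^G(e)$ as a trace, swapping sum and trace, and recognizing that the weighted sum of edge outer products reconstructs the Laplacian itself.

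First I would unfold the definition of effective resistance from equation~(\ref{eq:defER}): for an edge $e = (u,v)$, we have
\[
\ww_e \er^{G}(e) = \ww_e \, \cchi_{uv}^{T} \LL_G^{\dag} \cchi_{uv}.
\]
Since the right-hand side is a scalar, I can wrap it in a trace and then use the cyclic property of trace to obtain
\[
\ww_e \, \cchi_{uv}^{T} \LL_G^{\dag} \cchi_{uv} = \trace{ \ww_e \, \cchi_{uv} \cchi_{uv}^{T} \LL_G^{\dag} }.
\]

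Next I would sum over all edges and exchange the sum with the trace by linearity:
\[
\sum_{e \in E} \ww_e \er^{G}(e) = \trace{ \Biggl( \sum_{e=(u,v) \in E} \ww_e \, \cchi_{uv} \cchi_{uv}^{T} \Biggr) \LL_G^{\dag} }.
\]
The key identity now is that the weighted sum of edge outer products is exactly the graph Laplacian, i.e.\ $\sum_{e=(u,v)} \ww_e \cchi_{uv} \cchi_{uv}^{T} = \LL_G$, which follows directly from the definitions $\LL_G = \DD - \AA$ given in Section~\ref{sec:background} (each edge contributes $\ww_e$ to both diagonal entries and $-\ww_e$ to both off-diagonal entries). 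Substituting this in yields $\sum_e \ww_e \er^G(e) = \trace{\LL_G \LL_G^{\dag}}$.

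Finally I would identify $\LL_G \LL_G^{\dag}$ as the orthogonal projection onto the range of $\LL_G$. Since $\LL_G$ is a symmetric PSD matrix whose kernel is spanned by the indicator vectors of connected components, its range has dimension $n - c$ where $c$ is the number of connected components; assuming $G$ is connected (as is standard for sparsification, and the only case where the stated quantity is $n-1$), the range has dimension exactly $n-1$, so the projection has trace $n-1$. There is no real obstacle here: the only subtlety to flag is the connectedness assumption, but this is implicit throughout the paper since effective resistance is only finite between vertices in the same component.
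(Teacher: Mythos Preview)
Your proof is correct and is the standard trace argument for Foster's theorem. Note, however, that the paper does not actually supply a proof of this statement: it is recorded as a \emph{Fact} (a well-known result) and used as a black box to bound the total leverage score in Lemma~\ref{lem:sparsify}. So there is no ``paper's own proof'' to compare against; your argument simply fills in the standard justification the authors chose to omit.
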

Leverage scores are the preferred objects for defining
sampling distributions as they are scale invariant:
doubling the weights of all edges does not change them.
However, we will still make extensive uses of effective
resistances because of the need to approximate them
across different graphs.
Such approximations are difficult to state for leverage scores
because spectrally similar graphs may have very different
sets of combinatorial edges.
\begin{fact}
	\label{fact:ERApprox}
	If $G$ and $H$ are graphs such that $\LL_G \preceq \LL_H$,
	then for any vertices $u$ and $v$ we have
	\[
		\er^{H}(u, v) \leq \er^{G}(u, v).
	\]
\end{fact}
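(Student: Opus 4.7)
The plan is to prove this via the variational (Courant--Fischer style) characterization of the pseudoinverse quadratic form. Specifically, for any positive semi-definite matrix $\MM$ and any vector $\xx$ lying in the range of $\MM$, one has
\[
\xx^T \MM^{\dag} \xx = \max_{\yy}\left\{ 2\yy^T \xx - \yy^T \MM \yy \right\}.
\]
I will apply this identity with $\xx = \cchi_{uv}$, once for $\MM = \LL_G$ and once for $\MM = \LL_H$, then compare the two objectives using the L\"oewner inequality.

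First I need to verify the range condition. If $u$ and $v$ lie in different connected components of $G$, then $\er^{G}(u,v) = \infty$ and the claimed inequality is trivial, so assume they are in the same component, i.e., $\cchi_{uv}\in\mathrm{range}(\LL_G)$. The hypothesis $\LL_G \preceq \LL_H$ forces $\ker(\LL_H)\subseteq\ker(\LL_G)$: for any $\yy$ with $\LL_H \yy = 0$ we have $0 \le \yy^T \LL_G \yy \le \yy^T \LL_H \yy = 0$, so $\LL_G \yy = 0$. Taking orthogonal complements and using symmetry of both Laplacians gives $\mathrm{range}(\LL_G)\subseteq\mathrm{range}(\LL_H)$, so $\cchi_{uv}$ also lies in $\mathrm{range}(\LL_H)$ and the variational formula is valid for both.

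With both applications of the variational formula justified, the inequality is immediate: since $\LL_G \preceq \LL_H$ we have $\yy^T \LL_G \yy \le \yy^T \LL_H \yy$ for every $\yy$, hence
\[
2\yy^T \cchi_{uv} - \yy^T \LL_G \yy \;\geq\; 2\yy^T \cchi_{uv} - \yy^T \LL_H \yy.
\]
Taking $\max_{\yy}$ on both sides preserves the inequality and yields $\er^{G}(u,v) = \cchi_{uv}^T \LL_G^{\dag} \cchi_{uv} \geq \cchi_{uv}^T \LL_H^{\dag} \cchi_{uv} = \er^{H}(u,v)$, which is the claim.

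I do not anticipate a real obstacle here: the only mild subtlety is the range condition needed for the variational identity (the pseudoinverse is only the true inverse on its range), and the argument above shows that the hypothesis $\LL_G \preceq \LL_H$ automatically ensures $\mathrm{range}(\LL_G)\subseteq\mathrm{range}(\LL_H)$, so no additional connectivity assumption on $H$ is required.
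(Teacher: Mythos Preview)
Your argument is correct. The paper does not actually prove Fact~\ref{fact:ERApprox}; it is stated as a known fact (a generalization of Rayleigh's monotonicity law) without proof. Your variational approach via
\[
\xx^T \MM^{\dag} \xx = \max_{\yy}\left\{ 2\yy^T \xx - \yy^T \MM \yy \right\}
\]
is a standard and clean way to establish it, and your handling of the range/kernel condition is careful and correct.
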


Note that this generalizes Rayleigh's monotonicity law, which postulates
that the effective resistances can only increase as one removes edges from a graph.

\section{Random Walk Sparsification via Walk Sampling}
\label{sec:WalkSparsify}

In this section we describe our improved algorithm for
sparsifying random walk polynomials.
The main difficulty faced by such a routine is that the
actual walk matrix cannot be constructed.
Instead, we need to simulate the ideal sampling routine
shown in Algorithm~\ref{alg:IdealSample} by constructing
nearly tight upper bounds of leverages scores in $G^{k}$
that can also be efficiently sampled from,
without having explicit access to $G^{k}$.

The first obstacle to obtain such estimates is to get an access
to effective resistances in $G^{k}$.
To this end, the following lemma from~\cite{ChengCLPT15}
provides a helpful starting point.
\begin{lemma}\cite{ChengCLPT15}\label{lem:GndGk}
For odd $k$, we have
$\frac{1}{2} \LL_G \preceq \LL_{G^k} \preceq k \LL_G$
and for even $k$, we have
$\LL_{G^2} \preceq \LL_{G^k} \preceq \frac{k}{2} \LL_{G^2}.$
\end{lemma}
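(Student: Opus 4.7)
The plan is to reduce the operator inequalities to scalar inequalities on the eigenvalues of the normalized adjacency matrix. Let $\NN \defeq \DD^{-1/2}\AA\DD^{-1/2}$, a symmetric matrix whose eigenvalues lie in $[-1,1]$ (since $G$ is an undirected graph with nonnegative weights). Conjugating by $\DD^{-1/2}$ preserves the L\"oewner order, so it suffices to compare
\[
\DD^{-1/2}\LL_G \DD^{-1/2} = \II - \NN
\qquad\text{and}\qquad
\DD^{-1/2}\LL_{G^k}\DD^{-1/2} = \II - \NN^k,
\]
where the second identity follows from the telescoping computation
$\DD^{-1/2}\AA(\DD^{-1}\AA)^{k-1}\DD^{-1/2} = (\DD^{-1/2}\AA\DD^{-1/2})^k = \NN^k$. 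By the spectral theorem applied to $\NN$, it then suffices to prove the corresponding scalar inequalities in the variable $\lambda \in [-1,1]$.

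For odd $k$ the goal reduces to showing
\[
\tfrac{1}{2}(1-\lambda) \;\leq\; 1 - \lambda^{k} \;\leq\; k(1-\lambda)
\qquad\text{for all } \lambda\in[-1,1].
\]
The geometric factorization $1-\lambda^k = (1-\lambda)\bigl(1+\lambda+\lambda^2+\cdots+\lambda^{k-1}\bigr)$ turns both bounds into a bound on the second factor (the factor $1-\lambda\geq 0$ on $[-1,1]$). The upper bound is immediate: each summand $\lambda^i$ is at most $1$, so the sum is at most $k$. For even $k$ the corresponding factorization is $1 - \lambda^k = (1-\lambda^2)(1 + \lambda^2 + \lambda^4 + \cdots + \lambda^{k-2})$; the inner sum has $k/2$ nonnegative terms each bounded by $1$ and the constant term is $1$, so the sum lies in $[1, k/2]$, yielding both inequalities for the even case at once.

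The remaining step, and the only one that requires care, is the lower bound $1+\lambda+\cdots+\lambda^{k-1}\geq \tfrac{1}{2}$ for odd $k$ and $\lambda\in[-1,1]$. For $\lambda\in[0,1]$ the sum is at least its leading term $1$, so the bound is trivial. For $\lambda\in[-1,0]$, substitute $\mu=-\lambda\in[0,1]$ and use that $k$ is odd to rewrite
\[
1+\lambda+\cdots+\lambda^{k-1} \;=\; \frac{1+\mu^{k}}{1+\mu} \;\geq\; \frac{1}{1+\mu} \;\geq\; \frac{1}{2}.
\]
This is really the only place where oddness of $k$ is used, and it is the reason one cannot drop the factor $\tfrac{1}{2}$ in the lower bound. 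Unwinding the conjugation by $\DD^{-1/2}$ then converts these scalar inequalities back into the desired operator inequalities between $\LL_G$, $\LL_{G^2}$, and $\LL_{G^k}$.
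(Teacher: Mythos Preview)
Your proof is correct. The paper does not actually prove this lemma; it cites it from~\cite{ChengCLPT15} and uses it as a black box. Your argument---conjugating by $\DD^{-1/2}$ to reduce to scalar inequalities in the eigenvalues of the normalized adjacency matrix $\NN$, then factoring $1-\lambda^k$ as $(1-\lambda)\sum_{i<k}\lambda^i$ (odd $k$) or $(1-\lambda^2)\sum_{i<k/2}\lambda^{2i}$ (even $k$)---is the standard route and matches what appears in the cited reference. The only technicality worth a brief remark is that the conjugation step assumes $\DD$ is invertible, i.e.\ $G$ has no isolated vertices; this is harmless here since both sides vanish on isolated-vertex coordinates.
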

Furthermore, note that Lemma~\ref{lem:GndGk} combined with Fact~\ref{fact:ERApprox} implies for odd $k$ that
\begin{equation}\label{eq:erGkerG}
\er^{G^{k}}(u, v) \leq 2 \er^{G}(u, v)
\end{equation}
and for even $k$ that
\begin{equation}\label{eq:erGkerG2}
\er^{G^{k}}(u, v) \leq \er^{G^2}(u, v).
\end{equation}

Since $G^{k}$ might be dense, i.e. $E[G^{k}]=\Theta(n^2)$, it is prohibitive to use (\ref{eq:erGkerG}) and (\ref{eq:erGkerG2}) directly. Instead, we upper bound the values with a walk using the triangle inequality of effective resistances~\cite[Lemma 9.6.1]{S:Lecture07}.
\begin{fact}[Triangle Inequality for Effective Resistances]\label{fact:TriangleIneq}
For any graph $G$ and any walk $(u_0, u_1, \ldots, u_k)$, we have
\begin{equation}\label{eq:ErG}
\er^{G}(u_0, u_k) \leq
	\sum_{0 \leq i < k} \er^{G}(u_i, u_{i + 1}).
\end{equation}
\end{fact}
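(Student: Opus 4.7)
The plan is to reduce the claim to the base case $k = 2$, i.e., to the standard triangle inequality
\[
\er^G(u_0, u_2) \le \er^G(u_0, u_1) + \er^G(u_1, u_2),
\]
by a straightforward induction on the walk length: given the length-$(k-1)$ version applied to the prefix walk $(u_0, \dots, u_{k-1})$, the full claim follows by applying the triangle inequality to the triple $(u_0, u_{k-1}, u_k)$. Thus the only real work is in establishing the $k = 2$ case.

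For the triangle inequality I would use the electrical/potential viewpoint from \eqref{eq:defER}. Define the two voltage vectors
\[
\phi_1 \defeq \LL_G^{\dag} \cchi_{u_0 u_1}, \qquad \phi_2 \defeq \LL_G^{\dag} \cchi_{u_1 u_2}.
\]
By \eqref{eq:defER}, $\phi_1(u_0) - \phi_1(u_1) = \er^G(u_0, u_1)$ and $\phi_2(u_1) - \phi_2(u_2) = \er^G(u_1, u_2)$. By linearity of $\LL_G^{\dag}$, the sum satisfies $\phi_1 + \phi_2 = \LL_G^{\dag} \cchi_{u_0 u_2}$, so $(\phi_1 + \phi_2)(u_0) - (\phi_1 + \phi_2)(u_2) = \er^G(u_0, u_2)$. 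The crucial step is the discrete maximum principle: since $\phi_1$ is $\LL_G$-harmonic at every vertex except its unit current source $u_0$ and sink $u_1$, it attains its global maximum at $u_0$ and its global minimum at $u_1$. Consequently $\phi_1(u_2) \ge \phi_1(u_1)$, which yields
\[
\phi_1(u_0) - \phi_1(u_2) \le \phi_1(u_0) - \phi_1(u_1) = \er^G(u_0, u_1).
\]
The symmetric argument applied to $\phi_2$, whose extrema lie at $u_1$ and $u_2$, gives $\phi_2(u_0) - \phi_2(u_2) \le \er^G(u_1, u_2)$. Adding these two inequalities proves the base case.

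The main obstacle is precisely this application of the maximum principle, and it is a genuine obstacle rather than a convenience. A purely algebraic approach starting from $\er^G(u, v) = \cchi_{uv}^T \LL_G^{\dag} \cchi_{uv}$ and applying Cauchy--Schwarz in the $\LL_G^{\dag}$-inner product establishes only that $\sqrt{\er^G}$ is a metric; squaring then introduces an unwanted cross term $2\sqrt{\er^G(u_0, u_1)\, \er^G(u_1, u_2)}$ that cannot be eliminated without losing a factor of two via AM--GM. Removing this cross term requires the sign/monotonicity information intrinsic to harmonic potentials, which is what the maximum principle supplies. Once the $k = 2$ case is in hand, the induction on $k$ is purely formal and introduces no further difficulties.
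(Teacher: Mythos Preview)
The paper does not actually prove this fact: it is stated as a Fact with a citation to external lecture notes (Spielman, Lemma~9.6.1) and is used as a black box thereafter. So there is no in-paper proof to compare against.

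Your argument is correct and is essentially the standard proof of this result. The induction reduces the statement to $k=2$, and for the base case your use of the discrete maximum principle is exactly the right tool: since $\LL_G\phi_1=\cchi_{u_0u_1}$, the potential $\phi_1$ is harmonic away from $u_0,u_1$, attains its maximum at the source $u_0$ and its minimum at the sink $u_1$, and hence $\phi_1(u_0)-\phi_1(u_2)\le \phi_1(u_0)-\phi_1(u_1)$; symmetrically for $\phi_2$. Summing and using $\phi_1+\phi_2=\LL_G^{\dag}\cchi_{u_0u_2}$ finishes it. Your remark that Cauchy--Schwarz in the $\LL_G^{\dag}$ inner product only yields the metric property for $\sqrt{\er^G}$ (and therefore loses a factor of $2$ after squaring) is also accurate and explains why the maximum principle, rather than a purely bilinear-form argument, is needed here. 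The only implicit assumption worth making explicit is connectedness of the component containing the $u_i$, which is required for the maximum principle and for $\cchi_{u_iu_{i+1}}$ to lie in the image of $\LL_G$; this is harmless in the paper's setting.
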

Now, suppose we have a vector $\rrApprox$ that upper bounds the effective resistances, i.e., $\rrApprox_e\geq\er^{G}(e)$ for all $e$. Then, by Lemma~\ref{lem:sparsify} and Fact~\ref{fact:TriangleIneq}, to sparsify $G^k$, it suffices to sample a length $k$ random walk in $G$ with probability proportional to
\begin{equation}\label{eq:weightErGkApprox}
\ttautilde_{u_0, u_1, \ldots, u_k }^{(k)}
\defeq
\ww_{ u_0, u_1, \ldots, u_k } \cdot \sum_{0 \leq i < k} \rrApprox_{u_i, u_{i + 1}}.
\end{equation}
This distribution has the advantage that it is efficiently computable:
\begin{lemma}
\label{lem:DistributionAccess}
For any graph $G = (V, E, \ww)$,
and any vector $\rrApprox\in\mathbb{R}^E$,
we can sample length $k$ walks with probability proportional to
\[
\ww_{u_0, \ldots, u_k} \cdot \sum_{0 \leq i < k} \rrApprox_{u_i, u_{i + 1}}
\]
using the following procedure:
\begin{enumerate}
	\item Pick uniformly at random an index $i$ in the range $[0, k - 1]$.
	\item Choose an edge $(u_{i}, u_{i + 1})$ with probability proportional to $\ww_e\rrApprox_e$.
	\item Extend the walk in both directions from $u_{i}$
	and $u_{i + 1}$ via two random walks.
\end{enumerate}
\end{lemma}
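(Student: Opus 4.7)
The plan is to compute directly the probability that the three-step procedure produces a specific walk $(u_0, u_1, \ldots, u_k)$, and then to recognize the resulting expression as being proportional to $\ww_{u_0,\ldots,u_k} \cdot \sum_{0 \le i < k} \rrApprox_{u_i, u_{i+1}}$.

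First, I would fix an arbitrary walk $(u_0, \ldots, u_k)$ and condition on the choice of index $i \in [0, k-1]$ in Step~1 (each occurring with probability $1/k$) together with the event that the initial edge drawn in Step~2 equals exactly $(u_i, u_{i+1})$. By Step~2, this latter event has probability $\ww_{u_i u_{i+1}} \rrApprox_{u_i u_{i+1}} / \sum_e \ww_e \rrApprox_e$. Given this initialization, the probability that the forward random walk starting at $u_{i+1}$ traces $u_{i+1}, u_{i+2}, \ldots, u_k$ is $\prod_{j=i+1}^{k-1} \ww_{u_j u_{j+1}}/\dd_{u_j}$, and similarly the backward walk from $u_i$ traces $u_i, u_{i-1}, \ldots, u_0$ with probability $\prod_{j=1}^{i} \ww_{u_{j-1} u_j}/\dd_{u_j}$.

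The key algebraic step is to multiply these two random-walk probabilities together and observe that the numerator collects every edge weight along the walk \emph{except} $\ww_{u_i u_{i+1}}$, while the denominator collects $\dd_{u_j}$ for every internal vertex $j=1,\ldots,k-1$. Comparing to the walk-weight definition in~\eqref{eq:weightGk}, this product equals $\ww_{u_0,\ldots,u_k}/\ww_{u_i u_{i+1}}$. Multiplying in the factor $\ww_{u_i u_{i+1}} \rrApprox_{u_i u_{i+1}} /(k \sum_e \ww_e \rrApprox_e)$ from Steps~1 and~2 produces cancellation of $\ww_{u_i u_{i+1}}$, leaving a contribution of $\ww_{u_0,\ldots,u_k} \rrApprox_{u_i u_{i+1}} /(k \sum_e \ww_e \rrApprox_e)$ for each index $i$.

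Finally, summing over $i \in [0, k-1]$ yields
\[
\Pr[(u_0,\ldots,u_k)] = \frac{\ww_{u_0,\ldots,u_k}}{k \sum_e \ww_e \rrApprox_e} \sum_{0 \le i < k} \rrApprox_{u_i, u_{i+1}},
\]
which is proportional to the target weight, as desired. There is no real obstacle here beyond careful bookkeeping of indices; the only point that requires attention is the off-by-one in the two random-walk products, ensuring that the missing edge weight in the numerator is exactly the one chosen in Step~2 so that the $\ww_{u_i u_{i+1}}$ factors cancel cleanly.
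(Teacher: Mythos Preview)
Your proof is correct and follows essentially the same route as the paper's own proof: condition on the index $i$ and the chosen edge, write out the two random-walk extension probabilities, observe that their product equals $\ww_{u_0,\ldots,u_k}/\ww_{u_i u_{i+1}}$, and sum over $i$ to obtain $\ww_{u_0,\ldots,u_k}\sum_i \rrApprox_{u_i u_{i+1}}/(k\langle \ww,\rrApprox\rangle)$. The paper just compresses these steps into a single displayed equation.
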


\begin{proof}
    By total law of probability, the procedure samples a fixed walk $(u_0,\ldots,u_k)$ with probability equal to
    \begin{equation*}
        \sum_{i=0}^{k-1}\frac{1}{k} \cdot \frac{\ww_{u_{i},u_{i+1}}\rrApprox_{u_{i},u_{i+1}}}{\langle\ww,\rrApprox\rangle} \cdot \prod_{j=1}^{i} \frac{\ww_{u_{j-1},u_{j}}}{\dd_{u_{j}}}\cdot \prod_{j=i+1}^{k-1}\frac{\ww_{u_{j},u_{j+1}}}{\dd_{u_{j}}}= \frac{\ww_{u_{0},\ldots,u_{k}}}{k\langle\ww,\rrApprox\rangle} \sum_{i=0}^{k-1}\rrApprox_{u_{i},u_{i+1}},
    \end{equation*}
    where the first term is step (1), the second term is step (2) and the third and the fourth are for the two random walks extending the selected edge.
\end{proof}

The total number of samples needed by Lemma~\ref{lem:sparsify} can be extracted from summing over random walks containing
a particular edge in a way similar to~\cite{ChengCLPT15}.
For completeness, we present its proof in Appendix~\ref{sec:deferred}.
\begin{restatable}{lemma}{TotalWeights}
\label{lem:totalWeights}
	For any weighted graph $G = (V, E, \ww)$, any $k\in\mathbb{N}_+$,
	and any vector $\rrApprox\in\mathbb{R}^E$, it holds
	\begin{equation}\label{eq:UBnumSamples}
	\sum_{\left(u_0, u_1, \ldots, u_k \right)}
	\left[
	\ww_{u_0, u_1, \ldots, u_k}
	\cdot \left( \sum_{0 \leq i < k} \rrApprox_{u_i, u_{i + 1}} \right) \right]
	= k \cdot \left( \sum_{e\in E[G]}\ww_e \rrApprox_e \right).
	\end{equation}
\end{restatable}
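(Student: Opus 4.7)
The plan is to swap the order of summation and reduce each resulting term to a product of random walk transition probabilities that collapse to $1$. Interchanging the two sums,
\begin{equation*}
\sum_{(u_0,\ldots,u_k)}\ww_{u_0,\ldots,u_k}\sum_{0\le i<k}\rrApprox_{u_i,u_{i+1}} = \sum_{i=0}^{k-1}\sum_{(u_0,\ldots,u_k)}\ww_{u_0,\ldots,u_k}\rrApprox_{u_i,u_{i+1}},
\end{equation*}
it suffices to show that each inner sum equals $\sum_{e\in E[G]}\ww_e\rrApprox_e$, after which summing over the $k$ positions $i$ produces the claimed factor of $k$.

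To evaluate a single inner sum, I would condition on the oriented edge $(u_i,u_{i+1})=(a,b)$ at position $i$ and use the factorization
\begin{equation*}
\ww_{u_0,\ldots,u_k} = \ww_{a,b}\cdot\prod_{j=1}^{i}\frac{\ww_{u_{j-1},u_j}}{\dd_{u_j}}\cdot\prod_{j=i+1}^{k-1}\frac{\ww_{u_j,u_{j+1}}}{\dd_{u_j}},
\end{equation*}
which can be checked directly against Equation~\eqref{eq:weightGk} by matching numerators and denominators. Since $G$ is undirected, each ratio $\ww_{u_{j-1},u_j}/\dd_{u_j}$ is a single-step transition probability of the lazy random walk at $u_j$. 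Hence the first product is precisely the probability that a random walk starting at $u_i=a$ visits $u_{i-1},u_{i-2},\ldots,u_0$ in order, while the second is the probability that a random walk starting at $u_{i+1}=b$ visits $u_{i+2},\ldots,u_k$ in order. Summing over the free vertices $u_0,\ldots,u_{i-1}$ and $u_{i+2},\ldots,u_k$ decouples into two independent sums, each of which equals $1$ since it totals the probability mass of a random walk distribution over all possible trajectories of the given length.

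This leaves only the factor $\ww_{a,b}\rrApprox_{a,b}$ for each oriented edge $(a,b)$ at position $i$, so the inner sum collapses to $\sum_{e\in E[G]}\ww_e\rrApprox_e$, and summing over the $k$ positions finishes the proof. The main subtlety I expect to need to handle carefully is the pair of boundary cases $i=0$ and $i=k-1$, where one of the two extending random walks has length zero; in each case the corresponding product is empty and equals $1$ by convention, and there are no free vertices to sum on that side, so the identity still holds.
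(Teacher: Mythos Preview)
Your proposal is correct and follows essentially the same route as the paper: interchange the order of summation, group walks by the edge occupying position $i$, and show that the total weight of all length-$k$ walks with a prescribed edge at a prescribed position equals that edge's weight. The paper carries out this last step by an induction on $k$ that peels off one endpoint at a time, whereas you do it in one shot by interpreting the two side-products as random-walk transition probabilities summing to $1$; the arguments are interchangeable (minor quibble: these are simple random walks, not lazy ones).
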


For every odd $k$, by setting $\rrApprox$ to (an approximation of) $\er^{G}$, yields an efficient sampling procedure
due to (\ref{eq:UBnumSamples}) and Lemma~\ref{lem:DistributionAccess}.

However, when $k$ is even Lemma~\ref{lem:GndGk} gives a bound in terms of $\er^{G^2}$ (not $\er^{G}$), i.e. $\er^{G^{k}}(u, v)\leq \er^{G^2}(u, v)$. Hence, the distribution in Lemma~\ref{lem:DistributionAccess} requires an access to the $2$-step random walk matrix $G^{2}$, which might also be dense and therefore expensive
to compute.

Moreover, suppose $G$ is a $2$-length path graph $u-v-w$, then $\er^{G^{2}}(e)=+\infty$ for $e\in G$, since $G^{2}$ has only one edge $(u,w)$ (and self-loops).
A naive approach to tackle these issues is to substitute $\er^{G^{2}}$ with $\er^{G}$.
However, this approach fails shortly since it is not true in general that
\begin{equation}\label{eq:notTrue}
\er^{G}(u,v)+\er^{G}(v,w)\geq\er^{G^{2}}(u,w).
\end{equation}

We work around this by using effective resistances from the ``double cover'' of $G$, instead.
The ``double cover'' $G\times P_2$ is the tensor product of $G$ and a path of length $2$.
Combinatorially, $G \times P_2$ is a bipartite graph with vertex sets $V^{(A)}, V^{(B)}$
each a copy of $V$ such that for every edge $(u,v)\in G$ we insert in $G \times P_2$ the following two edges: $u^{(A)}v^{(B)}$ and $u^{(B)}v^{(A)}$ with $\ww_{u^{(A)}v^{(B)}} = \ww_{u^{(B)}v^{(A)}} = \ww_{uv}$.

The next lemma (proved in Appendix~\ref{sec:deferred}) fixes (\ref{eq:notTrue}) and guarantees for every edge $(u,w)\in G^2$ that
\begin{equation}\label{eq:gotItRight}
\er^{G^{2}}(u,w) = \er^{G\times P_2}(u^{(A)},w^{(A)})\leq\er^{G\times P_2}(u^{(A)},v^{(B)})+\er^{G\times P_2}(v^{(B)},w^{(A)}).
\end{equation}

\begin{restatable}{lemma}{EREven}
\label{lem:EREven}
For any vertices $u$ and $v$ in $G$, it holds
\[
\er^{G^2}(u, v)
= \er^{G \times P_2}(u^{\left(A\right)}, v^{\left(A \right)}),
\]
where $u^{(A)}$ and $v^{(A)}$ are the corresponding copies
of $u$ and $v$ in $V^{(A)}$, respectively.
\end{restatable}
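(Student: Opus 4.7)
The plan is to recognize $\LL_{G^2}$ as the Schur complement of $\LL_{G\times P_2}$ after eliminating the $V^{(B)}$-block, and then invoke the standard fact that Schur complementation preserves effective resistances between retained vertices. Ordering the vertices of $G\times P_2$ as $V^{(A)}\cupdot V^{(B)}$, each edge of $G\times P_2$ crosses between the two sides with the same weight as the corresponding edge of $G$, so the degree of $u^{(A)}$ in $G\times P_2$ is $\dd_u$ and the cross block of its adjacency matrix is exactly $\AA$. Hence
\[
\LL_{G\times P_2} = \begin{pmatrix} \DD & -\AA \\ -\AA & \DD \end{pmatrix},
\]
and the Schur complement onto $V^{(A)}$ is $\DD - (-\AA)\DD^{-1}(-\AA) = \DD - \AA\DD^{-1}\AA = \LL_{G^2}$. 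Here we use that $\DD$ is invertible; isolated vertices of $G$ can be removed without affecting either effective resistance.

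The remaining step is the identity $\chi^T \LL_{G\times P_2}^{\dagger} \chi = \chi^T \LL_{G^2}^{\dagger} \chi$ for any $\chi$ supported on $V^{(A)}$ with $\one^T\chi=0$. A self-contained derivation uses the variational form $\er^G(u,v) = \min\{\phi^T \LL_G \phi : \phi_u - \phi_v = 1\}$. For a fixed $\phi^{(A)}$, minimizing $(\phi^{(A)},\phi^{(B)})^T \LL_{G\times P_2}(\phi^{(A)},\phi^{(B)})$ over $\phi^{(B)}$ yields the first-order condition $\DD\phi^{(B)} = \AA\phi^{(A)}$, i.e., the harmonic extension $\phi^{(B)} = \DD^{-1}\AA\phi^{(A)}$. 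Substituting back, the cross and quadratic terms combine to
\[
(\phi^{(A)})^T\DD\phi^{(A)} - 2(\phi^{(A)})^T\AA\DD^{-1}\AA\phi^{(A)} + (\phi^{(A)})^T\AA\DD^{-1}\AA\phi^{(A)}
= (\phi^{(A)})^T \LL_{G^2}\phi^{(A)}.
\]
Imposing the boundary condition $\phi^{(A)}_u - \phi^{(A)}_v = 1$ and minimizing over $\phi^{(A)}$ therefore produces the same value as minimizing $\phi^T \LL_{G\times P_2}\phi$ over all $\phi$ with $\phi_{u^{(A)}} - \phi_{v^{(A)}} = 1$, which is the desired equality $\er^{G\times P_2}(u^{(A)},v^{(A)}) = \er^{G^2}(u,v)$.

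The harmonic-extension substitution is the only step with real content; the rest is block-matrix bookkeeping. One minor subtlety worth noting is that $G\times P_2$ may split into two components (for instance when $G$ is bipartite), but this does not obstruct the argument: $u^{(A)}$ and $v^{(A)}$ are connected in $G\times P_2$ exactly when there is an even-length walk from $u$ to $v$ in $G$, which is precisely the condition for $u$ and $v$ to be connected in $G^2$, so both sides of the claim are simultaneously finite and equal, or simultaneously $+\infty$.
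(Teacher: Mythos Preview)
Your proposal follows the paper's approach: identify $\LL_{G^2}$ as the Schur complement of
\[
\LL_{G\times P_2}=\begin{pmatrix}\DD & -\AA\\ -\AA & \DD\end{pmatrix}
\]
onto $V^{(A)}$, and then appeal to the fact that Schur complementation preserves effective resistances among the retained vertices. The paper justifies this preservation fact by a direct pseudoinverse computation (solving the block system and reading off $\zz_1^{T}\xx$), whereas you use the energy-minimization/harmonic-extension viewpoint; both are standard and your algebra for the substitution $\pphi^{(B)}=\DD^{-1}\AA\,\pphi^{(A)}$ is correct.

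There is, however, a slip in your variational formula: the Dirichlet principle gives
\[
\frac{1}{\er^{G}(u,v)}=\min\bigl\{\pphi^{T}\LL_{G}\pphi:\ \pphi_{u}-\pphi_{v}=1\bigr\},
\]
not $\er^{G}(u,v)$ itself. This does not break the argument, since you are matching the two constrained minima and hence the reciprocals of the effective resistances, but you should correct the statement so the identification at the end reads $1/\er^{G^2}(u,v)=1/\er^{G\times P_2}(u^{(A)},v^{(A)})$. Your remark about the disconnected case is a nice addition that the paper omits.
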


Using the preceding results, we design an algorithm with improved sampling count.
It takes any procedure that produces effective resistance distribution that dominates
the true one, and produces samples that suffice for simulating the ideal sampling algorithm on $G^{k}$ (c.f. Subsection~\ref{subsec:GSbyEffRes}, Algorithm~\ref{alg:IdealSample}). The pseudocode for this routine is shown in Algorithm~\ref{alg_RW}.
\begin{algorithm}[h]
	\caption{$\mathrm{SparsifyG^{k}} \left(G, k, \eps,
		\textsc{EREstimator} \right)$}
	\label{alg_RW}
	
	\textbf{Input:} Graph $G=(V,E,w)$, integer $k$,
	error $\eps$, routine \textsc{EREstimator} that estimates
	upper bounds for effective resistances of a graph $G$.
	
	\textbf{Output: } An $\eps$-sparsifier of $G^{k}$
	
	\begin{enumerate}
		\item If $k$ is odd
			\begin{enumerate}
				\item set $\rrApprox \leftarrow \textsc{EREstimator}(G)$,
			\end{enumerate}
		\item else
			\begin{enumerate}
				\item Set $\rrApprox^{(2)} \leftarrow \textsc{EREstimator}(G \times P_2)$,
				\item Set $\rrApprox_{e} \leftarrow \rrApprox^{(2)}(u^{(A)}, v^{(B)})$,
				for every edge $e = uv\in E[G]$\quad\quad\quad(c.f. Lemma~\ref{lem:EREven}).
			\end{enumerate}
		
		\item Set sampling overhead $h \leftarrow O(\eps^{-2}\log{n})$
		and $N \leftarrow h \cdot k \sum_{e\in E[G]} \ww_e\rrApprox_e$.
		\item Repeat $N$ times
		
		\begin{enumerate}
			
			\item Pick an edge $e$ in $G$ with probability
			proportional to $\ww_e\rrApprox_e$.
					
			\item Pick a random integer $0 \leq i < k$ uniformly random,
			set $u_i$ and $u_{i + 1}$ to be the two endpoints of $e$.
			
			\item Complete this random walk by taking $k - 1 - i$
			steps of random walk from $u_{i + 1}$ and $i$ steps from $u_{i}$.
			
			\item Add the edge $(u_{0}, u_1 \ldots, u_k)$ to $H$ with weight (c.f. Eq.(\ref{eq:weightGk}))
            \[
                \frac{1}{h \cdot \ww_{u_{0},u_{1},\ldots,u_{k}}\cdot\sum_{0\leq i<k}\rrApprox_{u_{i}u_{i+1}}}.
            \]
		\end{enumerate}			
	\end{enumerate}
\end{algorithm}

Note that from the perspective of this framework of picking edges
with probabilities proportional to $\ww_e\rrApprox_e$, and extending
them into walks, the previous result~\cite{ChengCLPT15} can be viewed as utilizing
a simple $\mathrm{EREstimator}$ that returns $1$ on the effective resistance of every edge.

\begin{theorem}
\label{thm:SparsifyGk}
Given any graph $G$, any values of $k$ and $\eps$, and
any effective resistance estimation algorithm $\mathrm{EREstimator}$
that produces w.h.p. effective resistance that sum up to $f(n, m)$,
calling $\mathrm{SparsifyG^{k}}(G, k, \eps, \mathrm{EREstimator})$ produces an
$\eps$-sparsifier of $G^{k}$ with $O(\eps^{-2} k \log{n}\cdot f(2n, 2m) )$
edges in time proportion to the cost of one call to $\mathrm{EREstimator}$
on a graph of twice the size, plus an overhead of
$\Otil(\eps^{-2} k^2 \log^2{n}\cdot f(2n, 2m) )$.
\end{theorem}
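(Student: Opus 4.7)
The plan is to realize Algorithm~\ref{alg_RW} as an instance of the ideal sampler from Algorithm~\ref{alg:IdealSample} applied to the implicit graph $G^{k}$, using walk-indexed leverage-score upper bounds $\ttautilde^{(k)}_{u_0,\ldots,u_k}\defeq\ww_{u_0,\ldots,u_k}\cdot\sum_{0\le i<k}\rrApprox_{u_i,u_{i+1}}$ from Eq.~(\ref{eq:weightErGkApprox}). Since several walks can collapse onto the same $G^{k}$-edge, the first task is to verify that $\ttautilde^{(k)}$ dominates true leverage scores up to a constant, once summed over intermediate vertices. For odd $k$, Eq.~(\ref{eq:erGkerG}) combined with Fact~\ref{fact:TriangleIneq} gives $\er^{G^{k}}(u_0,u_k)\le 2\sum_i\rrApprox_{u_i,u_{i+1}}$ along any walk, so Eq.~(\ref{eq:weightGkEdge}) yields $\sum_{u_1,\ldots,u_{k-1}}\ttautilde^{(k)}_{u_0,\ldots,u_k}\ge \tfrac{1}{2}\,\ww^{G^{k}}_{u_0,u_k}\,\er^{G^{k}}(u_0,u_k)$. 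For even $k$, the same reasoning applies after invoking Eq.~(\ref{eq:erGkerG2}) together with Lemma~\ref{lem:EREven} to rewrite the bound in terms of effective resistances on $G\times P_2$; this is precisely why the algorithm calls \textsc{EREstimator} on the double cover and accounts for the ``$2n,2m$'' in the theorem.

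Next, Lemma~\ref{lem:DistributionAccess} shows that steps (4a)--(4c) generate a walk with probability proportional to $\ttautilde^{(k)}$, with normalizing constant $k\sum_e\ww_e\rrApprox_e$ by Lemma~\ref{lem:totalWeights}. Under the hypothesis on \textsc{EREstimator}, $\sum_e\ww_e\rrApprox_e\le f(2n,2m)$ with high probability (the odd case is folded into the even by monotonicity of $f$), so the total leverage-upper-bound mass over all walks is $k\cdot f(2n,2m)$. Plugging into Lemma~\ref{lem:sparsify}, $N=O(\eps^{-2}\log n)\cdot k\cdot f(2n,2m)$ samples suffice, which matches the setting in step (3) and yields the claimed $O(\eps^{-2}k\log n\cdot f(2n,2m))$ edges. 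The reweighting in step (4d) is chosen so that each sample is an unbiased estimator of a single walk's contribution to $\LL_{G^{k}}$, so the matrix-concentration argument packaged inside Lemma~\ref{lem:sparsify} gives an $\eps$-spectral approximation with high probability.

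The runtime splits into: (i) one call to \textsc{EREstimator} on a graph of at most twice the input size; (ii) an $\Otil(m)$ preprocessing step to build a prefix-sum table over $\{\ww_e\rrApprox_e\}_e$ together with neighbor-weight tables at every vertex for the random-walk transitions; and (iii) $N$ sampling iterations, each using one $O(\log n)$ binary search to pick the starting edge plus $O(k)$ random-walk extension steps of $O(\log n)$ cost apiece. Multiplying (iii) yields the promised overhead $\Otil(\eps^{-2}k^{2}\log^{2} n\cdot f(2n,2m))$. The main subtlety to track carefully is the bookkeeping between walk-indexed sampling in Algorithm~\ref{alg_RW} and edge-indexed ideal sampling in Lemma~\ref{lem:sparsify}: the factor $k$ coming from the uniform choice of index $i$ in step (4b), the edge-weight collapse via Eq.~(\ref{eq:weightGkEdge}), and the normalization from Lemma~\ref{lem:totalWeights} must line up consistently so that the assigned walk weights form an unbiased estimator of $\LL_{G^{k}}$. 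Once that is verified, the theorem follows by direct assembly of the preceding lemmas.
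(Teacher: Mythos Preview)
Your proposal is correct and follows essentially the same approach as the paper: both arguments reduce Algorithm~\ref{alg_RW} to an instance of Algorithm~\ref{alg:IdealSample} on $G^{k}$ by combining Lemma~\ref{lem:GndGk} (via Eqs.~(\ref{eq:erGkerG}),(\ref{eq:erGkerG2})) with the triangle inequality (Fact~\ref{fact:TriangleIneq}) for the leverage-score upper bounds, Lemma~\ref{lem:DistributionAccess} for the sampling distribution, Lemma~\ref{lem:totalWeights} for the total mass, and Lemma~\ref{lem:EREven} for the even-$k$ reduction to $G\times P_2$. If anything, your write-up is more explicit than the paper's about the walk-to-edge collapse via Eq.~(\ref{eq:weightGkEdge}) and the per-sample $O(k\log n)$ cost breakdown; the only detail you leave slightly implicit is that for even $k$ the triangle inequality is applied along the alternating-side walk in $G\times P_2$ (the paper spells this out), but your reference to Lemma~\ref{lem:EREven} and Eq.~(\ref{eq:gotItRight}) makes clear you have this in mind.
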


\begin{proof}
By Lemma~\ref{lem:sparsify}, it suffices to show that this
algorithm simulates the ideal sampling algorithm given
in Algorithm~\ref{alg:IdealSample}.
Once again we split into the cases of $k$ being odd or even.

In the case of $k$ being odd, Lemma~\ref{lem:DistributionAccess}
gives that a walk $(u_0, u_1, \ldots u_k)$ is sampled with weight
at least
\[
\ww_{u_0, u_1, \ldots u_k}
\sum_{0 \leq i < k} \er^{G}\left( u_i, u_{i + 1}\right).
\]
The quality of the distribution
produced, follows from Lemma~\ref{lem:GndGk} and the triangle
inequality in Fact~\ref{fact:TriangleIneq}, which then combined
with Lemma~\ref{lem:sparsify} gives the quality of the output.
Also, the total size of the sparsifier, as well as the running time
follows from Lemma~\ref{lem:totalWeights}.

When $k$ is even, by combining Lemmas~\ref{lem:GndGk} and Lemma~\ref{lem:EREven}
we have
\[
\er^{G^{k}}\left(u, v\right)
\leq \er^{G \times P_2}
\left(u^{\left(A\right)}, v^{\left(A\right)}\right)=
\er^{G \times P_2}
\left(u^{\left(B\right)}, v^{\left(B\right)}\right).
\]

Also, note that because $k$ is even, each $k$ step walk in $G$ also
corresponds to a walk in $G \times P_2$ that starts/ends on the same side,
but alternates sides at each step.
Using (\ref{eq:gotItRight}) and the symmetry between $u^{(A)}v^{(B)}$ and $u^{(B)}v^{(A)}$, it suffices to sample length $k$ walks with estimated effective resistances satisfying for every edge $(u,v)\in G$
\[
\rrApprox_{uv} \geq \rr^{G \times P_2} \left(u^{(A)}, v^{(B)} \right).
\]
The rest of the algorithm follows similarly as in the case of odd $k$.

The extra term $\Theta(k\log n)$ in the overhead's runtime accounts for performing
a random walk of length $k$, i.e. after preprocessing in $O(n)$ time an neighboring edge
can be sampled using binary search in $O(\log n)$ time.
\end{proof}

This reduces the task of efficiently sampling edges
in  $G^{k}$ to producing good upper bounds for the effective resistances of an explicitly specified graph, either $G$
or $G \times P_2$.
In the next section we discuss this routine, with focus
on density-independent routines.

\section{Faster Density Independent Sparsification of Graphs}
\label{sec:mPlusStuff}

Note that the monomial sparsification routine only requires
a good distribution that dominates the effective resistances.
These effective resistances can in turn be computed w.r.t.
an approximate graph in a more efficient manner.

Our approach for obtaining density-independent routines
follow the approach given in~\cite{KoutisLP15}.
Namely, we aggressively make the graph more tree-like,
and build sparsifiers backwards, each leveraging access
to a sparsifier of a graph that is within a constant
factor of it.

The main algorithmic difficulty of designing density-independent schemes
is that numerically oriented approaches for estimating effective resistances
require $O(m \log{n})$ time.

Instead, a more useful method is to utilize low stretch spanning trees
to provide an upper bound on all leverage scores in terms of a tree's stretch.
The advantage of this approach is that the stretch of all edges in $G$ w.r.t.
a tree can be computed using lowest common ancestor queries in only $O(m)$
time~\cite{HarelT84}.
For a particular tree $T$, we define the stretch of an
edge $e$ w.r.t. $T$ as the sum of resistances over the
unique path $\mathcal{P}_T(e)$ in $T$ connecting $e$'s endpoints:
\[
str_{T}(e) \defeq \ww_e \sum_{e' \in \mathcal{P}_T(e)}
\frac{1}{\ww_{e'}}
\]
Extending this definition, the stretch of a graph $G(V,E,\ww)$ w.r.t. $T$ is given by
\[
str_{T}(G) \defeq \sum_{e \in E}
str_{T}(e)
\]

Our analysis relies on the following results:
\begin{lemma}\label{lem:TreeFacts}
\begin{enumerate}
\item \label{part:Transfer}
(Lemma 6.4. in~\cite{KoutisLP15})
If $G$ and $H$ are two graphs such that
$\LL_G \preceq \LL_H$, then for any
tree $T$, it holds $str_{T}(G)\leq str_{T}(H)$.

\item \label{part:EdgeCount}
If we have a tree $T \preceq G$, then we can construct
an $\eps$-sparsifier of $G$ with $O(\eps^{-2}
str_{T}(G) \log{n})$ edges in $O(m)$ time.
This is a direct consequence of Lemma~\ref{lem:sparsify},
as the stretch upper bounds leverage scores.

\item \label{part:LSST}
(Theorem 1 in~\cite{AbrahamN12})
For any graph $G$, we can obtain a tree with
total stretch $\Ohat(m \log{n})$ in
$\Ohat(m \log{n})$ time in the pointer machine model.

\item \label{part:LSSG}
(Lemma 5.9 in~\cite{CohenMPPX14:arxiv})
For any graph $G$ and any parameter $k$,
we can find in $\Ohat(m)$ time under the RAM model
a tree $T$ and a graph $\widehat{G}$ obtained by removing
$O(m / k)$ edges such that
$str_{T}(\widehat{G}) \leq \Ohat(m \log{n})$.
\end{enumerate}
\end{lemma}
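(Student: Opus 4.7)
The statement bundles four facts, three of which (parts~1, 3, and 4) are verbatim restatements of published results and require no new argument beyond citing \cite{KoutisLP15}, \cite{AbrahamN12}, and \cite{CohenMPPX14:arxiv}. The only part that needs a proper argument is part~\ref{part:EdgeCount}, so my plan is to focus there and simply lean on the cited sources for the others.

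The plan for part~\ref{part:EdgeCount} is to show that the stretch $str_T(e)$ is a valid leverage-score upper bound, so that Lemma~\ref{lem:sparsify} applies directly with $\ttautilde_e = str_T(e)$. First I would note that for any edge $e = (u,v)$ the effective resistance in the tree $T$ equals the sum of edge resistances along the unique $u$-$v$ path $\mathcal{P}_T(e)$, since on a tree current has no choice of route; that is,
\[
\er^{T}(u,v) \;=\; \sum_{e' \in \mathcal{P}_T(e)} \frac{1}{\ww_{e'}}.
\]
Multiplying by $\ww_e$ gives $\ww_e \er^{T}(e) = str_T(e)$. Next, because $\LL_T \preceq \LL_G$, Fact~\ref{fact:ERApprox} yields $\er^{G}(e) \leq \er^{T}(e)$, so $\ww_e \er^{G}(e) \leq str_T(e)$. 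With this upper bound in hand, I would feed $\ttautilde_e \defeq str_T(e)$ into Algorithm~\ref{alg:IdealSample}; Lemma~\ref{lem:sparsify} then produces an $\eps$-sparsifier with $O(\eps^{-2} \sum_e str_T(e) \log n) = O(\eps^{-2} str_T(G) \log n)$ edges.

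For the runtime accounting, I would rely on the Harel--Tarjan offline LCA algorithm~\cite{HarelT84}: precomputing edge resistances on all root-to-vertex paths and then answering one LCA query per non-tree edge computes all $str_T(e)$ values in $O(m)$ total time. The sampling itself is then done by building a prefix-sum array over the $str_T(e)$'s and drawing $N$ samples by binary search, which is what \textsc{IdealSample} already assumes.

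The main subtle point, and the one I expect to be the trickiest to pin down precisely, is the ``$O(m)$ time'' claim: strictly speaking, Lemma~\ref{lem:sparsify} charges $\Otil(\eps^{-2} \sum_e \ttautilde_e \log^2 n)$ for the sampling phase, so the honest total is $O(m) + \Otil(\eps^{-2} str_T(G) \log^2 n)$. I would therefore read the ``$O(m)$'' in part~\ref{part:EdgeCount} as the cost of the preprocessing (stretch computation and building a prefix-sum sampler) before the $N$ samples are drawn, with the sampling overhead accounted for separately wherever this lemma is invoked in Section~\ref{sec:mPlusStuff}. Everything else is bookkeeping from the two cited ingredients.
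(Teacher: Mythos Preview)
Your proposal is correct and matches the paper's approach exactly: the paper does not give a separate proof but simply cites \cite{KoutisLP15}, \cite{AbrahamN12}, \cite{CohenMPPX14:arxiv} for parts~1, 3, 4 and states part~2 as ``a direct consequence of Lemma~\ref{lem:sparsify}, as the stretch upper bounds leverage scores,'' which is precisely the argument you spell out. Your observation about the $O(m)$ runtime versus the sampling overhead is a valid subtlety that the paper glosses over.
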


We will use these tools to generate a sequence of graphs
based on a single low-stretch subgraph.
If we set $k \leftarrow \log^{\Theta(1)}n$ in Part~\ref{part:LSSG},
the number of edges omitted in $\widehat{G}$ can be sparsified
in $O(m)$ time using any of the sparsification
methods~\cite{KyngLPSS16,KyngPPS17}.
This leads to a scheme that start with
$\widehat{G}^{(0)} \defeq \widehat{G}$, and creates
a chain of graphs where the scaling factor of the tree increases:
\begin{equation}
\widehat{G}^{(i)} \defeq
\widehat{G} + 2^{i} \cdot T,
\label{eqn:GraphSequence}
\end{equation}
This sequence quickly leads to a graph whose stretch
is small enough that we can obtain an $O(1)$-sparsifier
in $O(m)$ time.
\begin{lemma}
\label{lem:LastStep}
For any $i \geq \Omega(\log\log{n})$,
an $O(1)$-sparsifier of $\widehat{G}^{(i)}$ with
$O(n \log{n})$ edges can be found in $O(m)$ time.
\end{lemma}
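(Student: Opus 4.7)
The plan is to apply Lemma~\ref{lem:TreeFacts}~\ref{part:EdgeCount} using the scaled tree $T' \defeq 2^{i} T$, where $T$ is the low-stretch subgraph tree obtained from Part~\ref{part:LSSG}. First, $T'$ is dominated by $\widehat{G}^{(i)}$: since $\LL_{\widehat{G}} \succeq 0$,
\[
\LL_{T'} \;=\; 2^{i}\, \LL_T \;\preceq\; \LL_{\widehat{G}} \;+\; 2^{i}\, \LL_T \;=\; \LL_{\widehat{G}^{(i)}},
\]
so $T' \preceq \widehat{G}^{(i)}$ as required.

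The next step is to bound $str_{T'}(\widehat{G}^{(i)})$. The $n-1$ edges of $T'$ each contribute stretch $1$ to themselves. For every non-tree edge $e$ of $\widehat{G}$, multiplying all tree weights by $2^{i}$ divides every tree-path resistance in the definition of stretch by $2^{i}$, so $str_{T'}(e) = 2^{-i}\, str_{T}(e)$. Summing and invoking the Part~\ref{part:LSSG} bound $str_{T}(\widehat{G}) \leq \Ohat(m \log n)$ yields
\[
str_{T'}(\widehat{G}^{(i)}) \;\leq\; (n-1) \;+\; 2^{-i}\,\Ohat(m \log n).
\]
For $i \geq \Omega(\log\log n)$, with the hidden constant chosen to overcome the $\Ohat$-overhead (and using that in the overall pipeline $\widehat{G}$ has already been brought down to $n\cdot \mathrm{polylog}(n)$ edges, so $m/n$ is polylogarithmic), the second term is $O(n)$, and thus $str_{T'}(\widehat{G}^{(i)}) = O(n)$.

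Plugging this into Part~\ref{part:EdgeCount} with a constant $\eps$ produces an $O(1)$-sparsifier of $\widehat{G}^{(i)}$ with $O(\eps^{-2} \cdot n \log n) = O(n \log n)$ edges. For the running time, the tree $T$ and the stretches $\{str_{T}(e)\}_{e}$ are computed once, via Part~\ref{part:LSSG} and the LCA machinery of~\cite{HarelT84}, in $O(m)$ time total; rescaling by $2^{-i}$ and performing prefix-sum sampling of $O(n \log n)$ edges then costs an additional $O(m + n \log n) = O(m)$ (in the applicable regime $m = \Omega(n \log n)$, otherwise the graph is already sparse enough to return directly). The main obstacle is making the stretch bound tight enough that $i = \Omega(\log\log n)$ truly suffices: this requires carefully accounting for the $(\log\log n)^{O(1)}$ factors hidden in the $\Ohat(m\log n)$ bound of Part~\ref{part:LSSG}, and verifying that the invocation context of the lemma guarantees $m \leq n \log^{O(1)} n$ so that the shortfall between $\log(m \log n / n)$ and $\log\log n$ is absorbed into the $\Omega$ hiding.
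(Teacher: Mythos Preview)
Your argument has a real gap at the step where you conclude $str_{T'}(\widehat{G}^{(i)}) = O(n)$. You write that ``in the overall pipeline $\widehat{G}$ has already been brought down to $n\cdot\mathrm{polylog}(n)$ edges,'' but this is not the case: $\widehat{G}$ is the graph from Lemma~\ref{lem:TreeFacts} Part~\ref{part:LSSG}, obtained from the input $G$ by removing only $O(m/k)$ edges, so it still has $\Theta(m)$ edges and the lemma is stated for arbitrary $m$. With $i = \Theta(\log\log n)$ you only get $2^{-i}\,\Ohat(m\log n) = \Ohat(m/\mathrm{polylog}\,n)$, not $O(n)$; reaching $O(n)$ stretch would require $i = \Omega(\log(m/n))$, which the hypothesis does not provide. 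Your closing caveat acknowledges this shortfall but resolves it by an assumption that is simply false.

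The paper sidesteps this by a two-stage construction. First, Part~\ref{part:EdgeCount} with the scaled tree yields an $O(1)$-sparsifier with $O(m\log^{-\Theta(1)}n)$ edges (not $O(n\log n)$). Second, a standard nearly-linear-time sparsifier is run on \emph{that} intermediate graph; since its edge count is $m$ divided by a sufficiently large polylog, the $\mathrm{polylog}$ overhead of the second routine is absorbed and the total time remains $O(m)$ while the final size is $O(n\log n)$. The key idea you are missing is this second invocation: scaling the tree by $2^{\Theta(\log\log n)}$ only buys enough slack to amortize the cost of another sparsification call, not to reach the target size directly.
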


\begin{proof}
Multiplying $T$ by a factor of $2^{i}$
reduces the total stretch by the same factor.
From Lemma~\ref{lem:TreeFacts} Part~\ref{part:LSSG},
we get that the total stretch is bounded by
$\Ohat(2^{-i} m \log{n})$.
This means by Lemma~\ref{lem:TreeFacts}
Part~\ref{part:EdgeCount}, $\widehat{G}^{(i)}$
has an $O(1)$ sparsifier with $O(m \log^{-\Theta(1)}n)$
edges.
Invoking a nearly-linear time graph sparsification
algorithm on this graph then gives the result.
\end{proof}

We will leverage sparsifiers of $\widehat{G}^{(i + 1)}$
to construct iteratively, sparsifiers of $\widehat{G}^{(i)}$
using the subroutine shown in Algorithm~\ref{alg:TreeSparsify}.
We also state its guarantees formally below.

\begin{algorithm}[h]
	\caption{$\mathrm{TreeSparsify}(G, G^\prime, \kappa,
\eps)$}
	\label{alg:TreeSparsify}
	
	\textbf{Input:} Graph $G =(V, E, \ww)$ with
	$\kappa$-sparsifier $G'$, and error $\eps > 0$.
	
	\textbf{Output: } $\widetilde{G}$ that is an
	$\eps$-sparsifier of $G$.

	\begin{enumerate}
		\item Construct a low stretch spanning tree $T$
		of $G^\prime$.
        \item Compute an upper bound on all leverage scores $\ttautilde$ of $G$ using~\cite{HarelT84}
		\item Sample $O(\eps^{-2} \log{n} \cdot str_T(G))$
		edges of $G$ by $\mathrm{IdealSample}(G, \eps, \ttautilde)$ (c.f. Algorithm~\ref{alg:IdealSample}).
	\end{enumerate}
\end{algorithm}

\begin{lemma}
\label{lem:TreeSparsify}
Given a $\kappa$-sparsifier $G'$ of $G$ and $\eps > 0$,
$\mathrm{TreeSparsify}(G, G', \kappa,
\eps)$ produces an $\eps$-sparsifier of $G$ with at most $\Otil(\eps^{-2} \log^2{n} \cdot \kappa \abs{E(G')})$ edges in $\Otil(m + \eps^{-2} \log^3{n} \cdot \kappa \abs{E(G')})$ time.
\end{lemma}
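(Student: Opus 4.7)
The plan is to follow the three steps of \textbf{TreeSparsify} in order and, for each, invoke the matching item of Lemma~\ref{lem:TreeFacts}, finally closing with Lemma~\ref{lem:sparsify}. WLOG rescale $G'$ so that $\LL_{G'}\preceq\LL_G\preceq\kappa\LL_{G'}$ (this is what it means for $G'$ to be a $\kappa$-sparsifier of $G$).

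First I would build the low stretch spanning tree $T$ of $G'$ using Part~\ref{part:LSST}: in $\Ohat(\abs{E(G')}\log n)$ time we obtain $T$ with $str_T(G')=\Ohat(\abs{E(G')}\log n)$. Because $T$ is a subgraph of $G'$ we have $\LL_T\preceq\LL_{G'}\preceq\LL_G$, so Fact~\ref{fact:ERApprox} gives $\er^G(e)\leq\er^T(e)$ for every edge $e\in E(G)$, which means $\ww_e\er^G(e)\leq\ww_e\er^T(e)=str_T(e)$. Hence $\ttautilde_e\defeq str_T(e)$ is a legal leverage score upper bound, computable in $O(m)$ time by the LCA-based scheme of~\cite{HarelT84} invoked in Step~2 of the algorithm. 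This justifies the sampling step as an instance of $\mathrm{IdealSample}$, and by Lemma~\ref{lem:sparsify} the output is an $\eps$-sparsifier of $G$ with $O(\eps^{-2}\log n\cdot str_T(G))$ edges, produced from the sampling phase in $\Otil(\eps^{-2}\log^2 n\cdot str_T(G))$ time.

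To bound $str_T(G)$ I would apply the Transfer Lemma (Part~\ref{part:Transfer}) to the relation $\LL_G\preceq\kappa\LL_{G'}$, viewing the right-hand side as the Laplacian of $G'$ with all weights scaled by $\kappa$. Since stretch is linear in the weights of the stretched graph when the tree is held fixed, this yields
\[
str_T(G)\;\leq\;\kappa\cdot str_T(G')\;=\;\Ohat\bigl(\kappa\abs{E(G')}\log n\bigr).
\]
Plugging this into the edge-count and time bounds of Lemma~\ref{lem:sparsify} gives $\Otil(\eps^{-2}\log^2 n\cdot\kappa\abs{E(G')})$ edges and $\Otil(\eps^{-2}\log^3 n\cdot\kappa\abs{E(G')})$ sampling time, to which we add the $O(m)$ stretch-computation cost and the $\Ohat(\abs{E(G')}\log n)$ LSST cost, producing the claimed total $\Otil(m+\eps^{-2}\log^3 n\cdot\kappa\abs{E(G')})$.

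The only subtle point is keeping the normalization and the direction of the Transfer Lemma straight: one must apply it to $\LL_G\preceq\kappa\LL_{G'}$ (not to $\LL_{G'}\preceq\LL_G$), so that the \emph{bigger} side of the inequality is the explicit tree-friendly graph whose stretch we already control. Once that is set up correctly, every other step is an essentially mechanical combination of the facts catalogued in Lemma~\ref{lem:TreeFacts} with the sampling bound of Lemma~\ref{lem:sparsify}.
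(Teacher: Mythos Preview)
Your proof is correct and follows essentially the same route as the paper's own proof: build a low-stretch spanning tree of $G'$, use it to upper bound the leverage scores of $G$, apply the Transfer Lemma (Part~\ref{part:Transfer}) and the LSST bound (Part~\ref{part:LSST}) to control the total stretch, and close with Lemma~\ref{lem:sparsify}. The only cosmetic difference is the choice of normalization of the $\kappa$-sparsifier: the paper takes $\LL_G\preceq\LL_{G'}\preceq\kappa\LL_G$, so the factor $\kappa$ enters in the resistance upper bound $\rrApprox\defeq\kappa\cdot\er^T$, whereas you take $\LL_{G'}\preceq\LL_G\preceq\kappa\LL_{G'}$, so the $\kappa$ enters instead when applying the Transfer Lemma to bound $str_T(G)\leq\kappa\cdot str_T(G')$; either way the same product $\kappa\cdot str_T(G')$ governs the sample count.
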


\begin{proof}
To apply Lemma~\ref{lem:sparsify}, we have to compute a vector $\rrApprox\geq\er^{G}$ and give an upper bound on $\langle \ww, \rrApprox \rangle$. Since $\LL_T\preceq\LL_{G^\prime}\preceq\kappa\LL_{G}$ it follows that
\begin{equation}\label{eq:rApproxUB}
\rrApprox\defeq\kappa\cdot\er^{T}\geq\er^{G}.
\end{equation}
Moreover, by combining (\ref{eq:rApproxUB}), $\LL_G\preceq\LL_{G^\prime}$, Lemma~\ref{lem:TreeFacts}, Part~\ref{part:Transfer} and Part~\ref{part:LSST} we obtain
\[
\langle\ww,\rrApprox\rangle=\kappa\cdot str_{T}(G)\leq\kappa\cdot str_{T}(G^{\prime})=\Ohat(\kappa\abs{E(G')}\log{n}).
\]
This yields the overall edge count and runtime.
\end{proof}

We present two density-independent sparsification algorithms
that iteratively construct sparsifiers backwards from
$\widehat{G}^{(k)}$ by:
\begin{enumerate}
\item Creating a crude $\epsilon / 2$-sparsifier of
$\widehat{G}^{(i)}$, $G'^{(i)}$ with $\Ohat(\epsilon^{-2}
n \log^3{n})$ edges
using \textsc{TreeSparsify} with $\widetilde{G}^{(i + 1)}$,
the sparsifier of $\widehat{G}^{(i + 1)}$ constructed
in the previous step as guide.
\item Further sparsify this crude sparsifier, $G'^{(i)}$, with
error $\epsilon / 2$ to from $\widetilde{G}^{(i)}$, which
has the desired edge count, and will be used in the next step.
\end{enumerate}

We refrain from providing pseudocode of these steps because
of the subtle differences in the resulting algorithms.

The current fastest sparsification routines compute effective resistances via the Johnson-Lindenstrauss
transform~\cite{SpielmanS08:journal}, which in turn requires
the use of fast linear system solvers~\cite{KyngLPSS16}.
\begin{lemma}
\label{lem:ToolNumerical}
Given a graph $G$, we can compute $2$-approximations to
its effective resistances in $\Ohat(m \log{n} + n \log^{2} n)$ time.
\end{lemma}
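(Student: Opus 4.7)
My plan is to combine the Johnson--Lindenstrauss dimension-reduction strategy of Spielman and Srivastava~\cite{SpielmanS08:journal} with a fast Laplacian solver~\cite{KyngLPSS16}, arranging the work so that the expensive-in-$m$ preprocessing happens once and the $O(\log n)$ subsequent solves scale only in the size of a sparse preconditioner.

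Setting up JL: write $\LL_G = \BB^{T}\WW\BB$, where $\BB\in\mathbb{R}^{m\times n}$ is the signed edge-vertex incidence matrix and $\WW$ is the diagonal edge-weight matrix, so that $\er^{G}(u,v) = \|\WW^{1/2}\BB\LL_G^{\dag}\cchi_{uv}\|_{2}^{2}$. Draw a random JL matrix $\QQ\in\mathbb{R}^{k\times m}$ with $k = \Theta(\log n)$ and form $\YY \defeq \QQ\WW^{1/2}\BB\LL_G^{\dag} \in \mathbb{R}^{k\times n}$. Standard JL concentration, applied to the $\binom{n}{2}$ vectors $\WW^{1/2}\BB\LL_G^{\dag}\cchi_{uv}$ viewed as columns, gives $\|\YY\cchi_{uv}\|_{2}^{2} \in [1/\sqrt{2},\sqrt{2}]\cdot \er^{G}(u,v)$ simultaneously for all pairs with high probability --- in particular, a $2$-approximation. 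Once $\YY$ is stored, any query takes $O(\log n)$ time.

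Computing $\YY$: each row is the solution to a Laplacian system $\LL_G y_{i} = (\QQ\WW^{1/2}\BB)_{i}^{T}$. For this I would use~\cite{KyngLPSS16} together with a one-time construction of a constant-factor spectral sparsifier $\widetilde{G}$ of $G$ with $O(n\log n)$ edges (buildable in $\Ohat(m\log n)$ time and equipped with a nearly-linear-time solver). Given $\widetilde{G}$ as preconditioner, $O(1)$ preconditioned Chebyshev iterations suffice for constant-accuracy solves; each iteration costs $O(m)$ for one matvec with $\LL_G$ plus $\Ohat(n\log n)$ for one solve against $\LL_{\widetilde{G}}$. Summing over $\Theta(\log n)$ right-hand sides, the total cost is $\Ohat(m\log n) + \Theta(\log n)\cdot(O(m)+\Ohat(n\log n)) = \Ohat(m\log n + n\log^{2}n)$, matching the claim.

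The main obstacle is arranging the solver work so that the $\Ohat(m\log n)$ cost is paid only once rather than per solve: this requires a preconditioner that is both cheap to build in $\Ohat(m\log n)$ time and cheap to solve against (in $\Ohat(n\log n)$ time), which is exactly what a constant-factor sparsifier of size $O(n\log n)$ together with the solver of~\cite{KyngLPSS16} provides. The remaining bookkeeping --- choosing constant JL distortion and constant solver accuracy so that the composed estimator still yields a final factor-$2$ guarantee --- is routine, since both parameters can be taken to be $O(1)$ without affecting the asymptotic running time.
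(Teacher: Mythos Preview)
The paper does not give a self-contained proof of this lemma: it states that the bound follows by setting the depth parameter to $n$ in the proof of Theorem~1.2 of~\cite{KyngLPSS16} and extracting the resistance estimates from the internal \textsc{Sparsify} call at $i=0$ in that paper's pseudocode, explicitly declining to spell out details. Your JL-plus-preconditioned-solver outline is indeed the mechanism underlying that reference, so in spirit you are reconstructing the same argument rather than offering an alternative one.

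Two cautions about your accounting, though. First, the step ``build a constant-factor sparsifier of $G$ with $O(n\log n)$ edges in $\Ohat(m\log n)$ time'' needs an external source: within this paper, the only route to such a sparsifier is Corollary~\ref{cor:SparsifyNumerical}, which \emph{depends} on the present lemma, so invoking it here would be circular. If instead you mean to pull the sparsifier directly out of~\cite{KyngLPSS16}, you are simply reproducing the paper's citation. Second, the claim that each solve against $\LL_{\widetilde G}$ costs $\Ohat(n\log n)$ --- i.e.\ linear in the sparsifier's edge count up to $\log\log$ factors --- is stronger than what a generic nearly-linear-time Laplacian solver supplies; achieving it again requires the specific work/depth tradeoff internal to~\cite{KyngLPSS16}. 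So while your modular decomposition is sound, the running-time budget bottoms out in exactly the black box the paper cites, and is not an independent derivation of the bound.
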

This runtime bound can be obtained by letting the depth approach
$n$ in the proof of Theorem 1.2 on page 49 of~\cite{KyngLPSS16}.
The effective resistances can in turn be extracted from the
call to \textsc{Sparsify} made at $i = 0$ in the pseudocode
in Figure 11 on page 46.
We omit details on these steps in the hope that significantly
simpler sparsification routines with similar performances will
be developed.

Combining this with the sequence of graphs defined
in (\ref{eqn:GraphSequence}) gives:
\begin{corollary}\label{cor:SparsifyNumerical}
There is a routine that takes a weighted undirected graph
$G$ with $n$ vertices, $m$ edges,
an error $\epsilon > 0$,
and produces in $\Ohat(m + \eps^{-2} \log^4{n})$ time
an $\epsilon$-sparsifier of $G$ with
$O(\epsilon^{-2} n \log{n})$ edges, as well as leverage
score upper bounds that sum up to $\Ohat(n \log^{2}n)$.
\end{corollary}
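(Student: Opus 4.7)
The plan is to execute the two-phase scheme sketched above: use low-stretch spanning trees to reduce the graph to a tree-like state where sparsification is cheap, then peel back to recover an $\eps$-sparsifier of $G$, invoking the numerical oracle of Lemma~\ref{lem:ToolNumerical} only on graphs whose edge counts are already essentially linear in $n$.

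First I would invoke Lemma~\ref{lem:TreeFacts} Part~\ref{part:LSSG} with parameter $k=\log^{c}n$ for a sufficiently large constant $c$, obtaining in $\Ohat(m)$ time a tree $T$ and a subgraph $\widehat G$ of $G$ with $str_{T}(\widehat G)=\Ohat(m\log n)$ and $|E(G)\setminus E(\widehat G)|=O(m/k)$. By choosing $c$ large enough the $O(m/k)$ omitted edges already fall below the target sparsifier budget, and I would sparsify them once using any off-the-shelf nearly-linear-time routine; this contributes an additive $O(m)$ to the time and at most $O(\eps^{-2}n\log n)$ edges, to be merged with the final output.

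Next I would define the chain $\widehat G^{(i)}\defeq\widehat G+2^{i}T$ as in~(\ref{eqn:GraphSequence}) and invoke Lemma~\ref{lem:LastStep} at $i^{*}=\Theta(\log\log n)$ to obtain an $O(1)$-sparsifier $\widetilde G^{(i^{*})}$ of $\widehat G^{(i^{*})}$ with $O(n\log n)$ edges in $O(m)$ time. Iterating $i=i^{*}-1,\ldots,1$, the containment $\widehat G^{(i)}\preceq\widehat G^{(i+1)}\preceq 2\widehat G^{(i)}$ makes $\widetilde G^{(i+1)}$ an $O(1)$-sparsifier of $\widehat G^{(i)}$, so I call $\mathrm{TreeSparsify}(\widehat G^{(i)},\widetilde G^{(i+1)},O(1),1/2)$ to obtain an intermediate $(1/2)$-sparsifier $G'^{(i)}$ with $\Ohat(n\log^{3}n)$ edges in $\Ohat(m+n\log^{4}n)$ time via Lemma~\ref{lem:TreeSparsify}; then I apply Lemma~\ref{lem:ToolNumerical} to $G'^{(i)}$ in $\Ohat(n\log^{4}n)$ time to get $2$-approximate effective resistances and feed them into Algorithm~\ref{alg:IdealSample} (using Fact~\ref{fact:foster} for the sample count) to resample down to $O(n\log n)$ edges, producing $\widetilde G^{(i)}$. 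For $i=0$ I repeat the same two-pass construction but with error $\eps$ in both passes, yielding $\widetilde G^{(0)}$ with $O(\eps^{-2}n\log n)$ edges in $\Ohat(m+\eps^{-2}n\log^{4}n)$ time; merging with the sparsifier of the omitted edges gives the desired $\eps$-sparsifier of $G$. As a side output I would emit the scaled stretches $\kappa\cdot str_{T'}$ computed inside the $i=0$ call from a low-stretch tree $T'$ of the guide $\widetilde G^{(1)}$ (Part~\ref{part:LSST}): these dominate the leverage scores of $\widehat G^{(0)}\preceq 2G$ by Part~\ref{part:Transfer} and sum to $\Ohat(n\log^{2}n)$ independently of $\eps$, because $\widetilde G^{(1)}$ has only $O(n\log n)$ edges.

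The hard part will be the bookkeeping across the $\Theta(\log\log n)$-many iterations: each inner pass must avoid picking up the $\log\log n$ factor outside the $\Ohat$, which forces the intermediate sizes $|E(G'^{(i)})|=\Ohat(n\log^{3}n)$ to land exactly in the regime where Lemma~\ref{lem:ToolNumerical} fits into $\Ohat(n\log^{4}n)$. The $O(1)$ guide-sparsifier constants chained through Lemma~\ref{lem:TreeSparsify} must remain truly bounded, the coarse-to-fine split must place the $\eps^{-2}$ blow-up only at the final pass, and the separately-handled $O(m/k)$ omitted edges must both fit in the final edge budget and have their leverage scores upper bounded within the $\Ohat(n\log^{2}n)$ total.
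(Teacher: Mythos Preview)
Your proposal is correct and follows essentially the same approach as the paper: build the chain $\widehat{G}^{(i)}=\widehat{G}+2^iT$, seed it at $i^{*}=\Theta(\log\log n)$ via Lemma~\ref{lem:LastStep}, descend using $\mathrm{TreeSparsify}$ followed by the numerical oracle of Lemma~\ref{lem:ToolNumerical} to keep the intermediate sparsifiers at $O(n\log n)$ edges, and switch to error~$\eps$ only at $i=0$, reading off the leverage-score bounds from the stretches computed inside that last $\mathrm{TreeSparsify}$ call. Your write-up is in fact more explicit than the paper's own proof on two points the paper leaves implicit: the handling of the $O(m/k)$ edges omitted when forming $\widehat{G}$, and the sandwich $\widehat{G}^{(i)}\preceq\widehat{G}^{(i+1)}\preceq 2\,\widehat{G}^{(i)}$ that lets $\widetilde{G}^{(i+1)}$ serve as an $O(1)$-guide for $\widehat{G}^{(i)}$.
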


\begin{proof}
Consider the sequence of matrices as defined in
(\ref{eqn:GraphSequence}).
The sparsifier for $\widehat{G}^{O(\log\log{n})}$
is given by Lemma~\ref{lem:LastStep}.

Then we can iteratively build $O(n\log{n})$ sized
$O(1)$-sparsifiers for $\widehat{G}^{(i)}$ all the
way up to $i = 1$.
The cost of invoking
$\mathrm{TreeSparisfy}(\widehat{G}^{(i)},
\widetilde{G}^{(i + 1)}, O(1), 2)$ at each step
is $\Ohat(m+n \log^{4}n)$, while the resulting sparsifier
has $\Ohat(n \log^{3}n)$ edges.
Lemma~\ref{lem:ToolNumerical} then turns this into
an $O(1)$-sparsifier for $\widehat{G}^{(i)}$.

At the last step of $i = 0$,
we invoke these same routines, but now with error $\epsilon$
to obtain the $\epsilon$-sparsifier.
Note that the effective resistance upper bounds are computed
during the call to $\mathrm{TreeSparisfy}$.
\end{proof}

The guarantees of this routine fits into the
requirements of the random walk sampling algorithm
from Section~\ref{sec:WalkSparsify} and yields the faster,
density-independent algorithm for sparsifying $G^{k}$,
that is our main result.

\begin{proof}(Of Theorem~\ref{thm:main})
The leverage scores upper bound obtained by Corollary~\ref{cor:SparsifyNumerical},
when combined with Theorem~\ref{thm:SparsifyGk}
produces an $\epsilon$-sparsifier for $G^{k}$ with
$\Ohat(\epsilon^{-2} k n \log^{3} n)$ edges in
$\Ohat(m + \eps^{-2} k^2 \log^4{n})$ time.
Sparsifying this graph once again using
Lemma~\ref{lem:ToolNumerical} then leads to the
main result as stated in Theorem~\ref{thm:main}.
\end{proof}

There are also purely combinatorial constructions of graph
sparsifiers based on spanners~\cite{KapralovP12,Koutis14,KyngPPS17}:
\begin{lemma}\label{lem:toolCombinatorial}
(Theorem 4.1. in~\cite{KyngPPS17})
Given a graph $G$ and an error $\eps > 0$,
we can compute an $\eps$-spectral sparsifier
of $G$ with $\Ohat(n \log^2{n})$ edges in $\Ohat(m \log^2{n})$ time.
\end{lemma}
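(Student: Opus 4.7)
The plan is to combine a nearly-linear-time spanner construction with the weighted-sampling framework of Lemma~\ref{lem:sparsify}, then tighten the distortion by iterative refinement. Since the tools available before this statement are all resistance-based, the cleanest route is to use a spanner to obtain good effective-resistance upper bounds and feed them into the ideal sampling scheme of Algorithm~\ref{alg:IdealSample}.

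First, I would compute a multiplicative $O(\log n)$-spanner $H$ of $G$ with $\Ohat(n)$ edges in $\Ohat(m)$ time using a standard combinatorial construction. For every edge $e=(u,v)\in E(G)$, the spanner supplies a path $\mathcal{P}_H(u,v)$ of short length, and Fact~\ref{fact:TriangleIneq} applied to $H$ together with Fact~\ref{fact:ERApprox} yields an upper bound $\rrApprox_e \geq \er^{G}(u,v)$. Foster's identity (Fact~\ref{fact:foster}) bounds $\sum_e \ww_e \er^{G}(e) = n-1$, and since the spanner inflates each bound by at most an $O(\log n)$ factor, we obtain $\sum_e \ww_e \rrApprox_e = \Ohat(n \log n)$.

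Second, I would feed these bounds into Algorithm~\ref{alg:IdealSample} with a coarse error parameter $\eps_0 = \Theta(1)$, producing a $\Theta(1)$-spectral sparsifier $G_1$ of $G$ with $\Ohat(n \log^2 n)$ edges in $\Ohat(m \log n)$ time. To reach the target error $\eps$, I would iterate: recompute a spanner of the current sparsifier, re-estimate resistances through it, and resample via Lemma~\ref{lem:sparsify}. Because each subsequent iteration operates on a graph with only $\Ohat(n \log^2 n)$ edges, the per-iteration cost falls to $\Ohat(n \log^3 n)$, and $O(\log(1/\eps))$ halving stages suffice; the total runtime is therefore dominated by the first pass at $\Ohat(m \log^2 n)$.

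The main obstacle I expect is the joint control of error and edge count across stages. Each stage multiplies the distortion by a factor $(1+\eps_i)$, so to finish at $(1+\eps)$ one must take $\eps_i = \Theta(\eps/\log n)$, and the matrix-Bernstein tail bounds at that scale require an additional logarithmic blow-up in the sample count that can only be absorbed if the spanner bounds at every stage sum to $\Ohat(n)$ rather than $\Ohat(n \log n)$. Tightening the spanner analysis---likely via a short-cycle or resistance-balanced decomposition rather than a plain distance spanner, and re-weighting cycle edges to cancel first-order contributions---is the delicate step needed to land exactly at the claimed $\Ohat(n \log^2 n)$ edge count without inflating the runtime beyond $\Ohat(m \log^2 n)$.
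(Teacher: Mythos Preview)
The paper does not prove this lemma at all: it is stated purely as a citation of Theorem~4.1 from~\cite{KyngPPS17} and used as a black box in the subsequent corollary. There is nothing in the paper to compare your argument against.

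On the substance of your sketch, the central step does not go through as written. A multiplicative $O(\log n)$-spanner controls weighted shortest-path distances, not effective resistances, so the assertion that ``the spanner inflates each bound by at most an $O(\log n)$ factor'' relative to the true $\er^{G}(e)$ is unjustified; Foster's theorem (Fact~\ref{fact:foster}) bounds $\sum_e \ww_e \er^{G}(e)$, not $\sum_e \ww_e \rrApprox_e$, and the ratio $\rrApprox_e/\er^{G}(e)$ can be arbitrarily large on individual edges even for a good distance spanner. The actual argument in~\cite{KyngPPS17} (building on~\cite{Koutis14,KapralovP12}) does not route through Foster plus a uniform inflation factor; it packs many spanners and uses the fact that removing a spanner keeps the graph spectrally close, which is a genuinely different mechanism from the one you outline. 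Your final paragraph correctly identifies that the error-compounding across stages forces $\eps_i=\Theta(\eps/\log n)$ and that this would blow up the sample count unless the total overestimate is $\Ohat(n)$ rather than $\Ohat(n\log n)$; that is exactly the gap, and it is not closed by ``tightening the spanner analysis'' along the lines you suggest.
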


The algorithm in Lemma~\ref{lem:toolCombinatorial}, applied to our sparsification scheme gives
\begin{corollary}
There is a combinatorial algorithm that for any graph
$G$ on $n$ vertices and $m$ edges, and any error $\epsilon > 0$,
produces in $\Ohat(m + n \log^{6}n)$ time an $\epsilon$-sparsifier with
$\Ohat(\eps^{-2} n \log^2{n})$ edges, as well
as leverage score upper bounds that sum up to
$\Ohat(n \log^3{n})$.
\end{corollary}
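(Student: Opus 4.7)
The plan is to mirror the proof of Corollary~\ref{cor:SparsifyNumerical}, keeping the same backward construction along the chain $\widehat{G}^{(i)}$ from (\ref{eqn:GraphSequence}) but replacing the numerical recompression tool (Lemma~\ref{lem:ToolNumerical}) by the combinatorial sparsifier of Lemma~\ref{lem:toolCombinatorial} at every level. I would start from the $O(1)$-sparsifier of $\widehat{G}^{(O(\log\log n))}$ with $O(n\log n)$ edges furnished by Lemma~\ref{lem:LastStep}, and then iterate down to the base level $i=0$, maintaining the invariant that at level $i$ we have an $O(1)$-sparsifier $\widetilde{G}^{(i)}$ of $\widehat{G}^{(i)}$ with $\Ohat(n\log^2 n)$ edges.

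At each intermediate level $i = O(\log\log n)-1,\dots,1$, the construction has two substeps. First, call $\mathrm{TreeSparsify}(\widehat{G}^{(i)}, \widetilde{G}^{(i+1)}, O(1), 1/2)$: by Lemma~\ref{lem:TreeSparsify}, since $|E(\widetilde{G}^{(i+1)})| = \Ohat(n\log^2 n)$, this produces a $(1/2)$-sparsifier $G'^{(i)}$ with $\Ohat(n\log^4 n)$ edges in $\Ohat(m + n\log^5 n)$ time. Second, invoke Lemma~\ref{lem:toolCombinatorial} on $G'^{(i)}$ with constant error to obtain $\widetilde{G}^{(i)}$ having $\Ohat(n\log^2 n)$ edges in $\Ohat(n\log^4 n\cdot \log^2 n) = \Ohat(n\log^6 n)$ time. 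Composing the two $O(1)$-approximations preserves the $O(1)$-sparsifier property. At the final level $i=0$, I re-run the two substeps with error $\eps/2$ each so that their composition gives an $\eps$-sparsifier of $\widehat{G}^{(0)} = \widehat{G}$, which in turn is a sparsifier of $G$ up to the $\Ohat(m/k)$ edges we removed at the outset and sparsify separately using any nearly-linear time routine (cf. the text preceding (\ref{eqn:GraphSequence})).

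For the total runtime, each of the $O(\log\log n)$ levels costs $\Ohat(m + n\log^6 n)$, with the $m$ term arising only inside $\mathrm{TreeSparsify}$ when we read $\widehat{G}^{(i)}$ and compute its stretches in $O(m)$ time via~\cite{HarelT84}. Summed over all levels and absorbed into $\Ohat$, this yields $\Ohat(m + n\log^6 n)$, matching the claimed bound. For the leverage score upper bounds, the key quantity $\langle \ww, \rrApprox\rangle = \kappa\cdot str_T(G)$ computed inside the final call to $\mathrm{TreeSparsify}$ is, by Lemma~\ref{lem:TreeFacts} Parts~\ref{part:Transfer} and~\ref{part:LSST}, bounded by $\Ohat(|E(\widetilde{G}^{(1)})|\log n) = \Ohat(n\log^3 n)$, which is precisely the claimed sum.

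The main obstacle is purely bookkeeping: the combinatorial tool Lemma~\ref{lem:toolCombinatorial} produces sparsifiers of size $\Ohat(n\log^2 n)$ rather than $O(n\log n)$, so one has to track how this extra $\log n$ factor propagates through the $\mathrm{TreeSparsify}$ multiplier $\log^2 n$ and the subsequent $\log^2 n$ overhead in Lemma~\ref{lem:toolCombinatorial}, and confirm that the $O(\log\log n)$ outer iterations contribute only $\log\log$ factors that $\Ohat$ absorbs. No new technical ingredients beyond those already developed in Section~\ref{sec:mPlusStuff} are needed.
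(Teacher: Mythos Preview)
Your proposal is correct and follows essentially the same argument as the paper: iterate backward along the chain $\widehat{G}^{(i)}$, but with the combinatorial sparsifier of Lemma~\ref{lem:toolCombinatorial} replacing the numerical tool, so that each $\widetilde{G}^{(i+1)}$ has $\Ohat(n\log^{2}n)$ edges, the intermediate $G'^{(i)}$ from $\mathrm{TreeSparsify}$ has $\Ohat(n\log^{4}n)$ edges, and recompressing it costs $\Ohat(n\log^{6}n)$ time. Your bookkeeping of the leverage-score sum via the final $\mathrm{TreeSparsify}$ call and the absorption of the $O(\log\log n)$ iteration factor into $\Ohat$ matches the paper's reasoning exactly.
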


\begin{proof}
This is similar to the routine calling numerical sparsifiers
outlined in Corollary~\ref{cor:SparsifyNumerical}.
However, each of the $\widetilde{G}^{(i + 1)}$ now has $\Ohat(n \log^2{n})$ edges.
Hence, the first crude approximation $\widetilde{G'}^{(i)}$
has $\Ohat(n \log^4{n})$ edges.
Sparsifying it down to $\Ohat(n \log^2{n})$ edges
takes $\Ohat(n \log^6{n})$ time.
\end{proof}

\bibliographystyle{alpha}
\bibliography{ref}

\begin{appendix}

\section{Deferred Proofs}
\label{sec:deferred}	

We provide now some additional details on Lemmas
from Section~\ref{sec:WalkSparsify} that are direct
consequences of steps in previous works.
The total summation of the sampling weights follows from a
summation identical to the special case of uniform
sampling, as presented in~\cite[Lemma 29]{ChengCLPT15}. More precisely, by evaluating the total weights of all random walks that involve a particular edge $e\in G$.

\TotalWeights*

\begin{proof}
	We first show by induction that the total weights of
	all length $k$ walks whose $i$\textsuperscript{th} edge is $e$
	is exactly $\ww_e$.

	The base case of $k = 1$ is trivial as only $e$ is
	a length $1$ walk between $u_0$ and $u_1$.
	
	The inductive case of $k > 1$ has two cases: $i > 0$ or $i < k  - 1$.
	We consider the $i > 0$ case only, as the other one follows by symmetry.
	Expanding the weight of a length $k$ walk gives:
	\[
	\ww\left( u_0, u_1, \ldots u_k \right)
	= \ww\left( u_0, u_1, \ldots u_{k - 1} \right) \frac{\AA_{u_{k - 1} u_{k}}}{\dd_{u_{k - 1}}}.
	\]
	The fact that $i < k - 1$ means that $u_{k}$ can be any neighbor
	of $u_{k - 1}$, leading to a sum that cancels the $\dd_{u_{k - 1}}$
	term in the denominator.
	Formally:
	\begin{align*}
	\sum_{\begin{subarray}{c}(u_0, u_1, \ldots u_k)\\e = u_{i} u_{i + 1}\end{subarray}}
	\ww\left( u_0, u_1, \ldots u_k \right)
	& = \sum_{\begin{subarray}{c}(u_0, u_1, \ldots u_k)\\e = u_{i} u_{i + 1}\end{subarray}}
	\ww\left( u_0, u_1, \ldots u_{k - 1} \right)
	\sum_{u_k}\frac{\AA_{u_{k - 1} u_{k}}}{\dd_{u_{k - 1}}}\\
	& = \sum_{\begin{subarray}{c}(u_0, u_1, \ldots u_{k - 1})\\e = u_{i} u_{i + 1}\end{subarray}}
	\ww\left( u_0, u_1, \ldots u_{k - 1} \right).
	\end{align*}
	The result then follows from the inductive hypothesis applied
	to walks of length $k - 1$ that have edge $i$ as $e$.

The proof then uses a double counting argument that breaks the summation over the edge $(u_{i}, u_{i + 1})$,
and by noting that the choice over $i$ implies that each
edge is picked exactly $k$ times.
We can rewrite the original summation as:
\[
\sum_{e} \sum_{0 \leq i < k} \rrApprox_{e}
\cdot \left(
\sum_{\begin{subarray}{c}\left(u_0, u_1, \ldots, u_k \right)\\
u_{i}u_{i + 1} = e\end{subarray}}
	\ww_{u_0, u_1, \ldots, u_k}\right).
\]
The proof above gives that the term within the bracket
is $\ww_e$.
So the summation over $i$ is just an extra factor of $k$,
by which we obtain the result.

\end{proof}	

The equivalence of effective resistances in $G^{2}$
and $G \times P_2$ requires the definition of Schur
complements.

\begin{definition}[Schur Complement]
The Schur Complement of a symmetric matrix in block form:
$\MM = \begin{pmatrix}
\MM_{[F,F]} & \MM_{[F,C]} \\ \MM_{[C,F]} & \MM_{[C,C]}
\end{pmatrix}$
that removes the block $F$ is:
\[
Sc\left(\MM, F\right)
\defeq \MM_{\left[C,C\right]} - \MM_{\left[C,F\right]} \MM_{\left[F,F\right]}^{-1} \MM_{\left[F,C\right]}.\]
\end{definition}

It can be seen that, if $\MM$ was the Laplacian of a graph $G$,
$Sc(M,F)$ is the Laplacian of the graph $G^{S}$,
which can be formed using the following iterative process:
\begin{itemize}
\item Iteratively for all vertices $u \in F$
(The set of vertices represented by the columns in $F$ of $\MM$.)
\begin{itemize}
\item For all pairs of edges $uv_1$ and $uv_2$ in the current graph
(including edges added in previous steps),
delete them and add the edge $v_1v_2$ with weight
\[
\ww_{uv_1} \ww_{uv_2} / \dd_u,
\]
where $\dd_u$ is the weighted degree of $u$
(once again w.r.t. the current graph).
\item Delete $v$
\end{itemize}

\end{itemize}

\begin{lemma} \label{lem:schur}
\label{lem_propER} For every vector $\zz=\left(\begin{array}{c}
\zz_{1}\\
0
\end{array}\right)$ it holds that
\[
\zz_{1}^{T}\left(\DD-\AA\DD^{-1}\AA\right)^{\dag}z_{1}=\left(\begin{array}{cc}
\zz_{1}^{T} & 0^{T}\end{array}\right)\left(\begin{array}{cc}
\DD & -\AA\\
-\AA & \DD
\end{array}\right)^{\dag}\left(\begin{array}{c}
\zz_{1}\\
0
\end{array}\right).
\]
By symmetry for any vector $\zz=\left(\begin{array}{c}
0\\
\zz_{2}
\end{array}\right)$ it holds that
\[
\zz_{2}^{T}\left(\DD-\AA\DD^{-1}\AA\right)^{\dag}\zz_{2}
=\left(\begin{array}{cc}
0^{T} & \zz_{2}^{T}
\end{array}\right)
\left(\begin{array}{cc}
\DD & -\AA\\
-\AA & \DD
\end{array}\right)^{\dag}
\left(\begin{array}{c}
0\\
\zz_{2}
\end{array}\right).
\]
In particular, the effective resistances are maintained under Schur
complement.
\end{lemma}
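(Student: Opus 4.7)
The plan is to recognize the block matrix on the right-hand side as the Laplacian of the double cover $G \times P_2$, and then reduce the statement to the standard fact that Schur complement preserves quadratic forms of pseudoinverses on the surviving block. Since the weighted degree of $u^{(A)}$ in $G \times P_2$ equals $\sum_v \ww_{uv} = \dd_u$ (and likewise on the $B$-side), and the only nonzero off-diagonal entries couple $u^{(A)}$ with $v^{(B)}$ at weight $\ww_{uv}$, we have
\[
\LL_{G \times P_2} \;=\; \begin{pmatrix} \DD & -\AA \\ -\AA & \DD \end{pmatrix}.
\]
Applying the Schur complement formula from Definition~(Schur Complement) to eliminate the $B$-block $F$ gives
\[
Sc(\LL_{G \times P_2}, F) \;=\; \DD - (-\AA)\DD^{-1}(-\AA) \;=\; \DD - \AA \DD^{-1} \AA \;=\; \LL_{G^2},
\]
so the lemma is exactly the assertion that for vectors supported on the $C$-block (the surviving $A$-side), the quadratic form of $\LL_{G \times P_2}^{\dag}$ coincides with that of its Schur complement's pseudoinverse.

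To derive this identity, I would use the block $LDL^T$ factorization
\[
\MM \;=\; \begin{pmatrix} I & 0 \\ \MM_{CF}\MM_{FF}^{-1} & I \end{pmatrix}\begin{pmatrix} \MM_{FF} & 0 \\ 0 & Sc(\MM,F) \end{pmatrix}\begin{pmatrix} I & \MM_{FF}^{-1}\MM_{FC} \\ 0 & I \end{pmatrix},
\]
which is valid whenever $\MM_{FF}$ is invertible on its range (it is, because $\DD$ is positive diagonal). Inverting each factor and multiplying out, the $(C,C)$-block of $\MM^{\dag}$, when applied to a vector whose $F$-component vanishes and whose $C$-component is read off, reduces to $Sc(\MM, F)^{\dag}$ applied to the $C$-component. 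Plugging in $\MM = \LL_{G \times P_2}$, $F$ = the $B$-side, and $\zz = (\zz_1, 0)^T$ yields precisely the displayed equality; the second equation follows by the $A \leftrightarrow B$ swap symmetry of $\LL_{G \times P_2}$.

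The main obstacle I anticipate is pseudoinverse bookkeeping: neither $\LL_{G \times P_2}$ nor $\LL_{G^2}$ is invertible. The kernel of $\LL_{G^2}$ contains the all-ones vector on $V^{(A)}$, while $\ker(\LL_{G \times P_2})$ contains the all-ones on $V^{(A)} \cup V^{(B)}$ plus possibly an alternating $\pm 1$ pattern if $G$ is bipartite. A clean workaround is to ground one vertex on each side so that both matrices become genuinely invertible, verify the block-matrix identity in that setting, and then lift by observing that the only vectors $(\zz_1, 0)$ used downstream satisfy $\mathbf{1}^T \zz_1 = 0$ (the effective resistance application takes $\zz_1 = \cchi_{uv}$), so they are orthogonal to every kernel direction on both sides and the grounding does not alter either quadratic form. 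The final ``In particular'' claim about effective resistances then follows by specializing to $\zz_1 = \cchi_{uv}$ and invoking the definition of $\er$ in~(\ref{eq:defER}), which is the input needed to deduce Lemma~\ref{lem:EREven}.
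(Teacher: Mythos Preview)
Your proposal is correct and rests on the same underlying idea as the paper---the Schur complement identity for block quadratic forms---but the two executions differ. The paper does not invoke the block $LDL^{T}$ factorization; it simply writes out the linear system
\[
\begin{pmatrix}\DD & -\AA\\ -\AA & \DD\end{pmatrix}\begin{pmatrix}\xx\\ \yy\end{pmatrix}=\begin{pmatrix}\zz_{1}\\ 0\end{pmatrix},
\]
solves the second row for $\yy=\DD^{-1}\AA\xx$, substitutes into the first to obtain $(\DD-\AA\DD^{-1}\AA)\xx=\zz_{1}$, and reads off $\zz_{1}^{T}\xx$. This is a shorter, more bare-hands route than factoring and inverting three matrices. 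On the other hand, your treatment is more careful about the pseudoinverse: the paper silently passes from $(\DD-\AA\DD^{-1}\AA)\xx=\zz_{1}$ to $\xx=(\DD-\AA\DD^{-1}\AA)^{\dag}\zz_{1}$ without checking that the $\xx$ coming from $\LL_{G\times P_2}^{\dag}$ has no component along $\mathbf{1}$ (it need not, but any such component is killed by $\zz_{1}^{T}$ when $\zz_{1}\perp\mathbf{1}$, which is the only case used downstream). Your grounding argument makes this explicit, at the cost of some extra setup. Either presentation suffices for the application to Lemma~\ref{lem:EREven}.
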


\begin{proof}
Consider the linear system
\[
\left(\begin{array}{cc}
\DD & -\AA\\
-\AA & \DD
\end{array}\right)\left(\begin{array}{c}
\xx\\
\yy
\end{array}\right)=\left(\begin{array}{c}
\zz_{1}\\
\zz_{2}
\end{array}\right)\iff\begin{array}{c}
\DD \xx-\AA y=\zz_{1}\\
-\AA \xx+\DD y=\zz_{2}
\end{array}\iff\begin{array}{c}
\xx=\DD^{-1}\left(\zz_{1}+\AA \yy\right)\\
\yy=\DD^{-1}\left(\zz_{2}+\AA \xx\right)
\end{array}.
\]
Since $\zz_{2}=0$, we have
\[
\begin{array}{c}
\DD \xx=\zz_{1}+\AA\DD^{-1}\AA \xx\\
\yy=\DD^{-1}\AA \xx
\end{array}
\Longrightarrow
\xx=\left(\DD-\AA\DD^{-1}\AA\right)^{\dag}\zz_{1}.
\]
and thus
\[
\left(\begin{array}{cc}
\zz_{1}^{T} & \zz_{2}^{T}\end{array}\right)\left(\begin{array}{c}
\xx\\
\yy
\end{array}\right)
=
\zz_{1}^{T} \xx
=\zz_{1}^{T}\left(\DD-\AA\DD^{-1}\AA\right)^{\dag}\zz_{1}.
\]
\end{proof}

\EREven*
\begin{proof}
Notice that
\[
\LL_{G^2} = \DD - \AA \DD^{-1} \AA
\]
is the Schur Complement of
\[
\LL_{G \times P_2} =
\begin{pmatrix} \DD & -\AA \\ -\AA & \DD \end{pmatrix}
\]
with respect to one half of the vertices, e.g. $V^{(B)}$.
So, the lemma follows from Lemma~{\ref{lem:schur}}.
\end{proof}

\end{appendix}

\end{document}